\newtheorem{theorem}{Theorem}
\title{ Automatic Depth-Optimized Quantum Circuit Synthesis for Diagonal Unitary Matrices with Asymptotically Optimal Gate Count} 
\author{Shihao Zhang,~Kai Huang,~and Lvzhou Li 
\thanks{School of Computer Science and Engineering, Sun Yat-sen University,  Guangzhou 510006, China. (Corresponding author: Lvzhou Li,   lilvzh@mail.sysu.edu.cn )}
}
\begin{document}
\maketitle
\pagestyle{plain}


\begin{abstract}
 Current noisy intermediate-scale quantum (NISQ) devices can only execute small circuits with shallow depth, as they are still constrained by the presence of noise: quantum gates have  error rates and quantum states  are fragile due to decoherence. Hence, it is of great importance to optimize the depth/gate-count  when designing  quantum circuits  for specific tasks. Diagonal unitary matrices are well-known to be key building blocks of many quantum algorithms or quantum computing procedures. Prior work has discussed the synthesis of diagonal unitary matrices over the primitive gate set $\{\text{CNOT}, R_Z\}$. However, the problem has not yet been fully understood, since the existing synthesis methods  have not  optimized the  circuit depth. 
 
In this paper, we  propose a depth-optimized synthesis  algorithm that  automatically produces a quantum circuit for any given diagonal unitary matrix. Specially, it not only ensures the asymptotically optimal gate-count, but also nearly halves the total circuit depth compared with  the   previous method.  Technically,  we discover a uniform circuit rewriting rule  well-suited for reducing the circuit depth. The performance of our synthesis algorithm is both  theoretically analyzed and  experimentally validated by evaluations on two examples. First, we  achieve a nearly 50\% depth reduction over Welch's method for synthesizing random diagonal unitary matrices with up to 16 qubits. Second, we achieve an average of
22.05\% depth reduction for  resynthesizing the diagonal part of specific quantum approximate optimization algorithm (QAOA) circuits with up to 14 qubits.

\end{abstract}

\section{Introduction}
 Compilation is necessary before  quantum algorithms are executed on  hardware.  As mentioned in Ref. \cite{ZHQJCZ21}, the compilation of a quantum program is decomposed into two
levels of translation. First, it converts an algorithm into a logical
circuit composed of a universal set of gates. These circuits are formulated
independently of the hardware implementation. Second, it converts
a logical circuit into a physical circuit with respect to hardware
constraints.  The first abstraction layer forms  one of the theoretical foundations of quantum computing, which has been intensively studied since the 1990s, and new problems that  constantly emerge are being studied. With the rapid development of quantum hardware in the noisy intermediate-scale quantum (NISQ) era~\cite{ref001,ref002,ref003,ref004}, the problems in the second abstraction layer have attracted much attention in recent years, e.g. see \cite{ZHQJCZ21,AAG20,alam2020circuit,liu2021qucloud,deng2020codar,niemann2021combining,duckering2021,ding2020quantum}. A central issue in this abstraction layer is the {\it  mapping problem}, that is, mapping the logical bits to  physical bits by adding the SWAP operation to meet the connectivity  constraints of the target hardware.

In the both two layers, it is necessary to optimize the depth and gate-count of the circuit, in order to reduce the circuit operation time and minimize the impact of decoherence and gate errors. The optimization of depth/gate-count makes more sense in the NISQ era, since the NISQ devices  are still constrained by the presence of noise: quantum gates have high error rates,  and qubits are fragile and decohere over time resulting in information loss\cite{BFA22}. Actually, lowering the depth and gate-count in the circuit is better from the perspective of  noise resiliency, since a lower gate-count means a lower accumulation of gate errors, and a lower circuit depth means the qubits will have a lower time to decohere (lose state) \cite{AAG20}.  Thus, a quantum circuit with optimized depth/gate-count  is more conducive to the realization in the NISQ era.

Our paper will focus on the first abstraction layer mentioned above: translation from an algorithm into a logical
circuit, usually called {\it logic synthesis}.  Indeed, the synthesis, optimization and simulation of quantum circuits have attracted  a lot of attention in the field of quantum computing. In this direction, various theoretical methods \cite{ref005, ref006, ref007, ref008, ref009, ref010, ref011} as well as automated design tools \cite{ref012, ref013, ref014} have been proposed, and new techniques aimed at general- or specific-purpose quantum circuits are highly desirable today.

Among various gate sets for constructing diverse quantum circuits, the two-qubit gate CNOT and single-qubit $Z$-axis rotation  gate $R_Z(\theta)$ play a crucial role. CNOT gates can perform the entangling operation, while notable examples of $R_Z(\theta)$ gates include the phase gate with $\theta=\pi/2$ (denoted by $S$) in the Clifford group and the non-Clifford   gate with $\theta=\pi/4$ (denoted by $T$) \cite{ref015}. For instance,  circuits over $\{\text{CNOT}, R_Z(\theta)\}$ can act as  building blocks that participate in constructing multiple control gates \cite{ref016, ref017}, compiling quantum state permutations \cite{ref018} and performing magic state distillation for fault-tolerant information processing \cite{ref019}.

A range of notable related work has been put forward   to address the synthesis of quantum circuits over $\{\text{CNOT}, R_Z\}$, with  optimizing one or more of the following  targets:  CNOT count, $R_Z$ count, total gate count, total circuit depth, $T$-depth,  and ancilla qubit number.  Amy et al. \cite{ref007} proposed a meet-in-the-middle algorithm for the synthesis of small depth-optimal quantum circuits illustrated over Clifford+$T$  gate set. Then they also considered polynomial-time $T$-count and $T$-depth  optimization of Clifford+$T$ circuits via matroid partitioning \cite{ref008}. More recently, Nam et al. \cite{ref013} obtained substantial reductions in both $R_Z$ and CNOT gate  counts relying on a variety of optimization subroutines. Amy et al. \cite{ref020} studied the problem of minimizing CNOT count in $\{\text{CNOT}, R_Z\}$  circuit synthesis with a heuristic algorithm.

Note that a quantum circuit over $\{\text{CNOT}, R_Z\}$ can be described mathematically as the product of a  diagonal unitary matrix and a permutation matrix.   Actually,   
diagonal unitary matrices and their corresponding circuits are explored to have nontrivial computational power and applications in quantum computing \cite{ref023}, e.g., as important parts in Grover search~\cite{ref015,ref024,ref025}, quantum approximate optimization algorithm (QAOA) \cite{AAG20} and quantum algorithms for string problems \cite{xu2022quantum,li2022winning}, for solving quantum simulation problems \cite{ref022, ref029}, and for the generation of a $t$-design of random states \cite{ref030}. As a result, the synthesis of diagonal unitary matrices   is  critical for executing many quantum computing tasks.

Refs. \cite{ref021, ref022} presented methods to  construct quantum circuits over $\{\text{CNOT}, R_Z\}$  implementing diagonal unitary matrices that achieve the asymptotically optimal gate-count.  More specifically,  given a $2^n\times 2^n$ diagonal unitary matrix, a quantum circuit can be constructed with $2^{n+1}-3$ gates  in the worst case. 
However, there is still much room for  improvement as  follows.  First, the prior works \cite{ref021, ref022} have not considered to  optimize the depth of the synthesized circuit. As is well known, reducing the circuit depth is very important in the NISQ era, and shallow quantum circuits to solve practical problems are being eagerly expected \cite{ref032}.   Second, readers might need to grasp more mathematical knowledge for comprehending these methods, such as ideas from Lie group theory  \cite{ref021} or Paley-ordered Walsh functions \cite{ref022}.  
Also, 
the synthesis algorithm in  \cite{ref022} is described in detail but not  summarized into a separate and concise form  
for readers to catch at first sight. Finally, the  practicality of synthesis algorithms on more cases about  diagonal unitary matrices in  quantum computation need to be evaluated.

In  this paper we aim at algorithms that automatically produce the quantum  circuit over $\{\text{CNOT}, R_Z\}$  for any given diagonal unitary matrix, with an especial focus on  reducing the  circuit depth,  while keeping the  asymptotically optimal gate-count. Our contributions are summarized as follows:
\begin{enumerate}
 \item Technically, we  discover  a  uniform  circuit rewriting rule    (Theorem~\ref{rule}) that is well-suited for optimizing the depth of  quantum $\{\text{CNOT}, R_Z\}$ circuits constructed for implementing diagonal unitary matrices with the asymptotically optimal gate count (Theorem~\ref{circuit_construction}). 
 
  \item Taking a step further, we propose a     depth-optimized circuit synthesis algorithm (\textbf{Algorithm}  \ref{Algorithm1}) that not only ensures the asymptotically optimal gate-count, but also   automatically provides a nearly  half reduction in circuit depth over  Welch's method \cite{ref022} for the general case of large size.

  \item Finally, the  practical performance of our synthesis algorithm is validated by  experimental evaluations on two typical cases. First, we synthesize the general random diagonal unitary matrix with  up to 16 qubits and achieve a nearly 50\% depth reduction compared with Welch' method. Second, we resynthesize the diagonal part of specific QAOA circuits with  up to 14 qubits and achieve an average of 22.05\% depth reduction.
\end{enumerate}
From the above results, the quantum circuits obtained in this paper are more conducive to the realization in the NISQ era, since lowering the depth and gate-count of a circuit is  benificial for noise resiliency.

The rest of this paper is organized as follows. Section~\ref{Prelim} introduces some useful notations and facts about quantum circuit design. Section~\ref{Depth-Optimized Synthesis} proposes a circuit depth-optimized synthesis algorithm  for generating quantum circuits over \{CNOT, $R_Z$\} gates that can implement arbitrary diagonal unitary matrices, which also  achieves the asymptotically optimal gate-count in the generic case.   Section~\ref{sec-exp} performs experimental evaluations on two typical instances to illustrate the performances of our circuit synthesis algorithm.  Section~\ref{sec-conclusion}  concludes the paper.

\section{Preliminaries}
\label{Prelim}
For the reader's convenience, in this section we introduce some basic notations and facts about the quantum circuit model and circuit synthesis algorithm used throughout the paper.

\subsection{Notations}

In this paper, the symbol $\circ$ is used to concatenate two (sub)circuits; [\emph{i}, \emph{j}] denotes the integer set ${i, i+1, \dots ,  j}$; for an $n$-bit binary number $\vec{k}$ and a decimal number $q$, the two equations $q = {\rm bin2dec}(\vec{k})$ and $\vec{k} = {\rm dec2bin}(q, n)$ indicate their conversion; an $m$-bit string with all $0$ (or $1$) is denoted as $0^{(m)}$ (or $1^{(m)}$); the commonly used identity and Hadamard matrices are 
\begin{equation}\label{4}
    I=\left(\begin{array}{cc}
        1 & 0 \\
        0 & 1
    \end{array}\right), 
    H=\frac{1}{\sqrt{2}}\left(\begin{array}{cc}
        1 & 1 \\
        1 & -1
    \end{array}\right).
\end{equation}

\subsection{Depth/Gate-count of Quantum Circuits}
 A quantum circuit  can be represented as a directed acyclic graph (DAG) in which each node corresponds to a circuit's gate and each edge  corresponds to the   input/output of a gate.  Then the circuit depth $d$ is defined as the maximum length of a path flowing from an input of the circuit to an output \cite{ref007}. Equivalently speaking, $d$ is the number of layers of quantum gates that compactly act on all disjoint sets of qubits  \cite{ref032}. The gate count and system size  denote the number of  gates and qubits involved in the circuit, respectively. 
An example is given in Fig. \ref{figure1},  where   the gate-count,  depth and system size of the circuit are $5$, $3$ and $4$, respectively. 
 
 \begin{figure}[ht]
\centering
\includegraphics[scale=0.6]{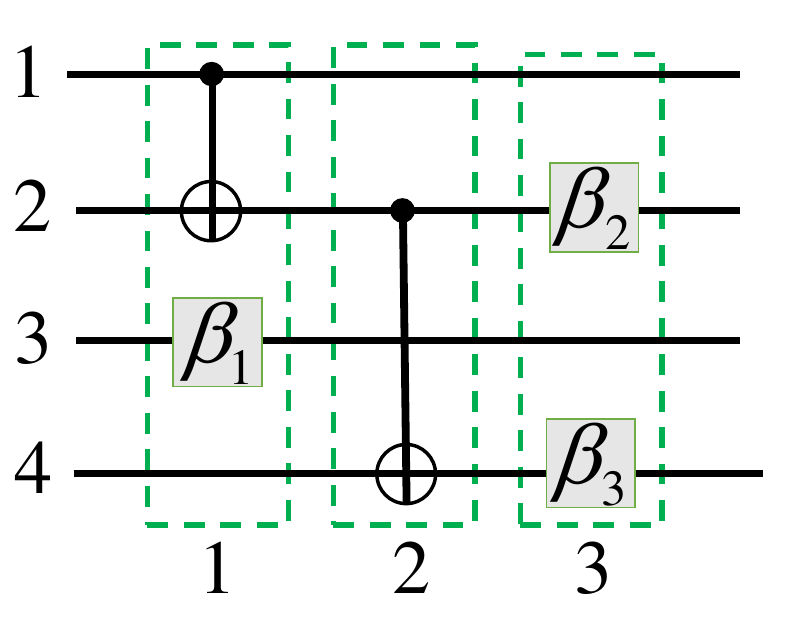}
\caption{ A simple quantum circuit.}
\label{figure1}
\end{figure}

\subsection{Parameterized \texorpdfstring{$\{\text{CNOT}, R_Z\}$}{} circuit}
\label{section2_1}
We use the row and column number of a target quantum circuit $QC$ to conveniently describe its structure and function. In  $QC$, each horizontal line numbered $r\in \{1,2,\dots,n\}$ indicates a qubit, and each layer of parallel  gates is in column $l\in \{1,2,\dots,d\}$ with $d$ being the depth of $QC$. In this way, each single-qubit gate can be uniquely located  at the coordinate $(r,l)$, and each CNOT gate is indicated by $(c,t,l)$  with  $c$ and $t$  denoting the control and target bit,  respectively.
For example, the two CNOT gates in Fig. \ref{figure1}  have the corresponding coordinates $(1,2,1)$ and $(2,4,2)$, and the three $R_Z(-\beta)$  gates with rotation angles ${\beta }_{1}$,${{\beta }_{2}}$, and ${{\beta }_{3}}$ are located at (3,1), (2,3), and (4,3), respectively.
Thus, a  circuit over the set \{CNOT, $R_Z$\}  can be described by the following parameters: circuit size $n$, gate-count $T$, circuit depth $d$,     the position of each $R_Z(-\beta)$  gate and CNOT gate, and the rotation angle of each $R_Z(-\beta)$  gate. 


Accordingly, we use $\vec{k}=\ket{k(1,l)}\ket{k(2,l)}\dots\ket{k(n,l)}$ to represent a basis state  between columns $l$ and  $l+1$ where $r$ in $k(r,l)$ refers to the order number of a qubit. Thus, the input and output basis state can be  denoted as $\ket{k(1,0)}\ket{k(2,0)}\dots\\
\ket{k(n,0)}$ and $\ket{k(1,d)}\ket{k(2,d)}\dots\ket{k(n,d)}$, respectively.

\subsection{Commutation and Rewriting Rules}\label{commutation_rule}
The  CNOT gate and $Z$-basis rotation  gate $(R_Z)$ respectively act on two- and one-qubit basis state as follows:
\begin{equation}\label{1}
    \text{CNOT}(c,t)\ket{k_c}\ket{k_t}=\ket{k_c}\ket{k_t\oplus k_c},
\end{equation}
\begin{equation}\label{2}
    R_Z(-\beta;r)\ket{k_r}=\begin{pmatrix} e^{i\beta /2} & 0\\0 & e^{-i\beta/2} \end{pmatrix} \ket{k_r}=e^{i\beta(-1)^{k_r}/2}\ket{k_r},
\end{equation}
for any qubits $c,t,r\in [1,n]$. The indices $c$  and $t$ in  CNOT($c$, $t$) denote the control and target qubits it acts on, respectively. The index $r$ in $R_Z(-\beta;r)$ denotes that the gate $R_Z(-\beta)$  is performed on qubit $r$, and the value $\beta=0$ indicates a trivial  identity gate.

A wide variaty of commutation  and rewritng rules related to the gate set \{CNOT, $R_Z$\}  have been introduced for  quantum circuit synthesis and optimization \cite{ref013, ref022}. Generally speaking, the employment of more such rules would yield better optimization results at the cost of more complicated processing and a higher runtime. In this paper, we only need to take into account some most essential rules that suffice to achieve our substantial depth-optimiztion goal.\par
\emph{Commutation rules for CNOT gates.} From the basis transformation about a CNOT gate in Eq. (\ref{1}), it can be  verified that CNOT($c_2$, $t_2$) commutes with CNOT($c_1$, $t_1$) only when  both $c_2\neq t_1$ and $c_1\neq t_2$ are  satisfied. Another useful commutation relation presented in Fig.~\ref{figure2}(a) can be used to reduce three CNOT gates to two.

\emph{Commutation rules for $R_Z$ gates.} Obviously, any two $R_Z(-\beta)$ gates commute with each other and can be directly merged into a new one according to Eq. (\ref{2}).

\emph{Commutation rules for $R_Z$ and CNOT gates.}  $R_Z(-\beta;c)$ gate commutes with CNOT($c$, $t$)  as shown in Fig. \ref{figure2}(b).  

\emph{Rewriting rules for \{CNOT, $R_Z$\} subcircuits.} Rewriting rules indicate broader commutation relations between subcircuits over \{CNOT, $R_Z$\}  \cite{ref013}. For exmaple, the combination of rules in Figs. \ref{figure2}(a) and \ref{figure2}(b) can lead to a result in Fig. \ref{figure2}(c) where CNOT($c$, $t$)$ \circ R_Z(-\beta;t) \circ $CNOT($c$, $t$)  commutes with CNOT($t$, $r\neq c$) as well as an 
extended result in Fig. \ref{figure2}(d). Note the subcircuit in the red dashed box of Figs. \ref{figure2}(c) and \ref{figure2}(d) can be generalized to the one that consists of an even number of CNOT($c_i$, $t$)  with different controls $c_i$ and any number of $R_Z(-\beta; t)$.
\begin{figure}[ht]
\centering
\includegraphics[scale=0.4]{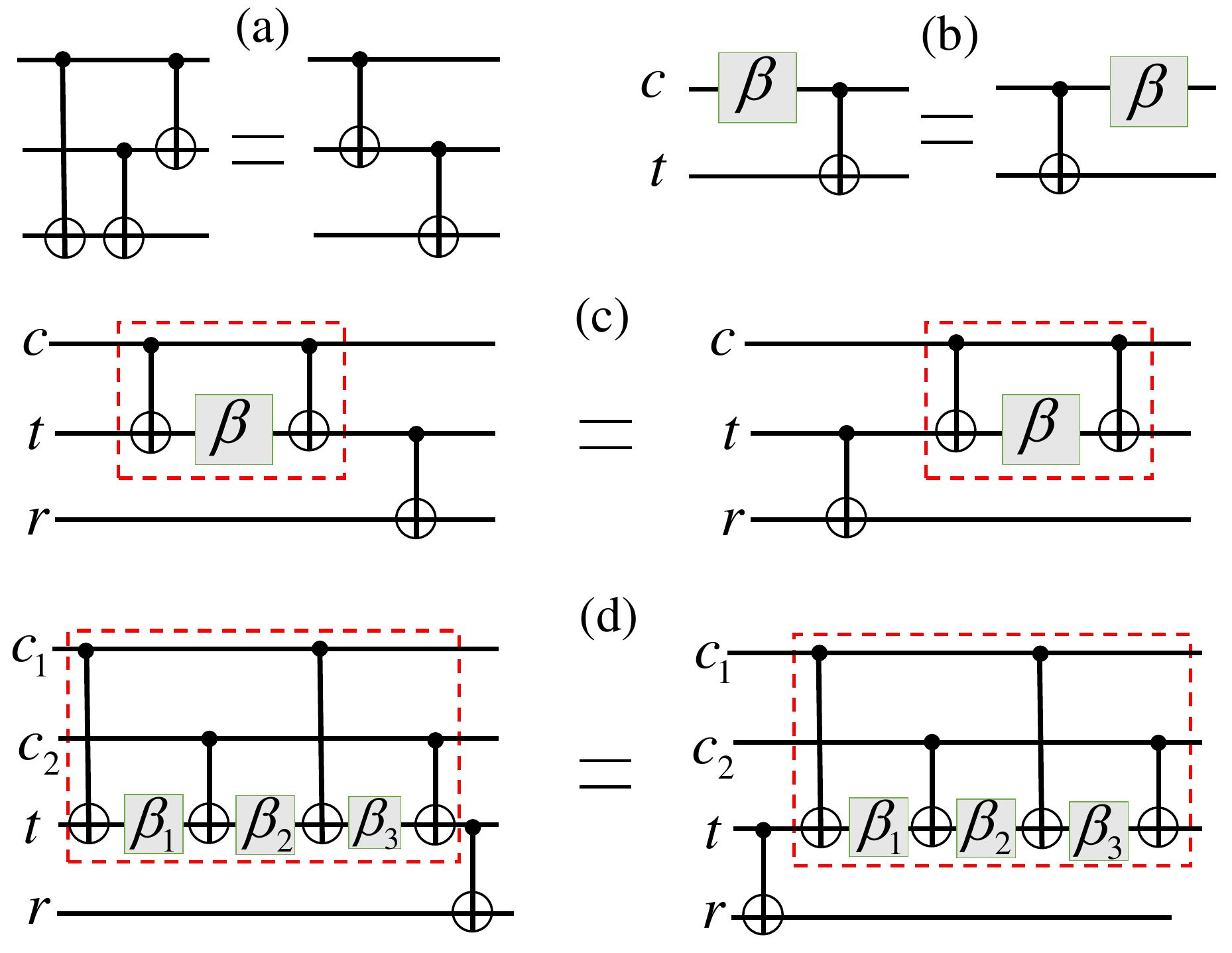}
\caption{Commutation and rewriting rules for $\{\text{CNOT}, R_{Z}\}$ circuits.}
\label{figure2}
\end{figure}

\subsection{Quantum circuit synthesis algorithm}\label{syn_algo}
A quantum circuit synthesis algorithm is an algorithm that can synthesize a quantum circuit over a certain gate set for realizing   target unitary matrices. The performance of this algorithm is usually evaluated with two metrics: (1) its running time for synthesizing a circuit, that is, its time complexity; (2) the circuit complexity of the synthesized quantum circuit \cite{ref033, haferkamp2022linear}, including the gate count and circuit depth. In general, a synthesis  algorithm can have different runtime complexities and circuit complexities for different input unitary matrices depending on their structures, and the upper bound among all cases is called the worst-case behaviour of this algorithm and usually regarded as the algorithm's complexity. That is to say, a synthesis algorithm itself may have better performances on certain cases compared to the worst case. Therefore, one  can comprehensively evaluate the performance of a synthesis algorithm by investigating  its  complexity (in the worst case) as well as effects on some  special cases. 

\section{Depth-Optimized Synthesis of circuits for Diagonal Unitary Matrices }
\label{Depth-Optimized Synthesis}

For realizing  general diagonal unitary matrices, in  this section  we first  derive 
a procedure that directly starts from matrix decomposition to  quantum circuit construction with an asymptotically optimal gate-count over $\{\text{CNOT}, R_{Z}\}$ (summarized as Theorem~\ref{circuit_construction}), and then propose a uniform circuit rewriting rule well-suited  for circuit depth optimization (see Theorem~\ref{rule}). Based on these results, we make a further step towards a straightforward   depth-optimized synthesis algorithm (\textbf{Algorithm}~\ref{Algorithm1}) that can  automatically  achieve a nearly 50\% reduction in circuit depth compared with   Ref.~\cite{ref022} for implementing large-size matrices. Besides the general case, we also discuss the possible optimization on special cases.

\subsection{Matrix Decomposition}

For a general size $N\times N\;(N=2^n)$ diagonal unitary matrix
\begin{equation}\label{5}
  D(\overrightarrow{\theta})=\left[
    \begin{array}{cccc}
      e^{i\theta_{00..00}}&0&0&0\\
      0&e^{i\theta_{00..01}}&0&0\\
      \vdots&\vdots&\ddots&\vdots\\
      0&0&0&e^{i\theta_{11..11}}
    \end{array}
    \right],
\end{equation}
where $\overrightarrow{\theta}=[\theta_{00..00},\theta_{00..01},\cdots,\theta_{11..11}]^T$ can be
expanded by a set of parameters
$\overrightarrow{\alpha}=[\alpha_{00..00},\alpha_{00..01},\cdots,\alpha_{11..11}]^T$ through the
$n$-qubit Hadamard transform as
\begin{equation}\label{6}
  \overrightarrow{\theta}=\widetilde{H}\overrightarrow{\alpha},
\end{equation}
with  $\widetilde{H}=H^{\otimes n}=\widetilde{H}^{\dag}$ and $H$ is given in Eq. (\ref{4}). The effect of $\widetilde{H}$ on a
basis state $|j\rangle$ is
\begin{equation}\label{7}
  \widetilde{H}|j\rangle=\frac{1}{\sqrt{2^n}}\sum_{k\in \{0,1\}^{n} }(-1)^{j\cdot k}|k\rangle
\end{equation}
with $j\cdot k=j_1k_1\oplus j_2k_2\oplus\cdots\oplus j_nk_n$ for  $j=j_1j_2\cdots j_n$ and $k=k_1k_2\cdots k_n$,  The 
element of $\widetilde{H}$  is given by
$\widetilde{H}_{j,k}=\widetilde{H}_{k,j}=\langle k|\widetilde{H}|j\rangle=(-1)^{j\cdot k}\frac{1}{\sqrt{2^n}}$.
Therefore, $\overrightarrow{\alpha}$ can be solved from the given
$\overrightarrow{\theta}$ as 
\begin{equation}\label{solve_alpha}
  \overrightarrow{\alpha}=\widetilde{H}\overrightarrow{\theta}
\end{equation}
with
\begin{equation}\label{solve_alpha_j}
  \alpha_j=\sum_{k\in \{0,1\}^n }\widetilde{H}_{j,k}\theta_k
  =\frac{1}{\sqrt{2^n}}\sum_{k\in \{0,1\}^n }(-1)^{j\cdot k}\theta_k.
\end{equation}
By inserting Eq. (\ref{6}) into Eq. (\ref{5}), the  matrix $D(\overrightarrow{\theta})$ can be
decomposed into a  product of $N$ commutative diagonal matrices $B_j(\alpha_j)$: 
\begin{equation}\label{10}
  D(\overrightarrow{\theta})=\prod_{j\in \{0,1\}^n}
  B_{j}(\alpha_{j}),
\end{equation}
with the diagonal element indexed by $k \in \{0,1\}^n$ of $B_j(\alpha_j)$ being
\begin{equation}\label{11}
  B^{(k)}_{j}(\alpha_j)=\exp[i\widetilde{H}_{k,j}\alpha_j]
  =\exp[\frac{i\alpha_j}{\sqrt{2^n}}(-1)^{j\cdot k}].
\end{equation}
Therefore, the effect of $B_j(\alpha_j)$ on an $n$-qubit computational basis state
$|k\rangle$ is to apply a phase shift as
\begin{equation}\label{12}
  B_j(\alpha_j)|k\rangle=\exp[\frac{i\alpha_j}{\sqrt{2^n}}(-1)^{j\cdot k}]|k\rangle.
\end{equation}
For each string  $j=j_1j_2\cdots j_n$, we denote the  set of positions of all `1' bits as $P_j=\{p_1,p_2,\cdots,p_{m}\}$ such that $j_{p_1}=j_{p_2}=\cdots =j_{p_m}=1$ with $m$   being the Hamming weight of $j$. Then Eq. (\ref{12}) can be written as
\begin{equation}\label{13}
  \begin{array}{rl}
    &B_j(\alpha_j)|k_1\rangle|k_2\rangle\cdots|k_n\rangle\\
    =&\exp[\frac{i\alpha_j}{\sqrt{2^n}}
      (-1)^{k_{p_1}\oplus k_{p_2}\oplus\cdots\oplus k_{p_m}}]
    |k_1\rangle|k_2\rangle\cdots|k_n\rangle,
  \end{array}
\end{equation}
where the phase factor of the basis state
$|k\rangle=|k_1\rangle|k_2\rangle\cdots|k_n\rangle$
is uniquely determined by $j$ and $k$.

\subsection{Gate-Count Optimal Circuit Construction}
\label{sec3.2}
For implementing the target matrix in Eq. \eqref{5}, we consider  constructing
a \{CNOT, $R_Z$\}  circuit module $M_j$ for realizing each
matrix $B_j(\alpha_j)$ described in Eq. \eqref{13} with 
$P_j=\{p_1,p_2,\ldots,p_m\}$ in three steps: \textbf{(i)} apply $(m-1)$ CNOT gates 
denoted by CNOT$(p_1,p_m)$,
CNOT$(p_2,p_m)$,\ldots, 
CNOT$(p_{m-1}\\,p_m)$ respectively ; 
\textbf{(ii)} apply a 
$R_Z(-\beta_j;p_m)$ gate in Eq. \eqref{2} with
\begin{equation}\label{beta_j}
\beta_j=\alpha_j/\sqrt{2^{n-2}}
\end{equation}	
for $\alpha_j$ given in Eq. \eqref{solve_alpha_j}; \textbf{(iii)} finally, apply $(m-1)$ CNOT gates denoted by CNOT$(p_{m-1},p_m)$, CNOT$(p_{m-2},p_m), \ldots$,\\ CNOT$(p_1,p_m)$ respectively. By exploiting Eqs. \eqref{1} and \eqref{2}, such a constructed module acts on any input basis state $\left|k_1\right\rangle\left|k_2\right\rangle...\left|k_n\right\rangle$ as 
\begin{equation}\label{15}
\begin{aligned}
 & \text{  }\left| {{k}_{1}} \right\rangle \left| {{k}_{2}} \right\rangle ...\left| {{k}_{{{p}_{m}}}} \right\rangle ...\left| {{k}_{n}} \right\rangle   \\
 & \xrightarrow{(\text{i})}\left| {{k}_{1}} \right\rangle \left| {{k}_{2}} \right\rangle ...\left| \mathop{\oplus }_{i=1}^{m}\text{ }{{k}_{{{p}_{i}}}} \right\rangle ...\left| {{k}_{n}} \right\rangle \\  
 & \xrightarrow{(\text{ii})}\exp [\frac{i{{\alpha }_{j}}}{\sqrt{{{2}^{n}}}}{{(-1)}^{{{k}_{{{p}_{1}}}}\oplus {{k}_{{{p}_{2}}}}\oplus ...\oplus {{k}_{{{p}_{m}}}}}}]\left| {{k}_{1}} \right\rangle ...\left| \mathop{\oplus }_{i=1}^{m}\text{ }{{k}_{{{p}_{i}}}} \right\rangle ...\left| {{k}_{n}} \right\rangle   \\
 & \xrightarrow{(\text{iii})}\exp [\frac{i{{\alpha }_{j}}}{\sqrt{{{2}^{n}}}}{{(-1)}^{{{k}_{{{p}_{1}}}}\oplus {{k}_{{{p}_{2}}}}\oplus ...\oplus {{k}_{{{p}_{m}}}}}}]\left| {{k}_{1}} \right\rangle \left| {{k}_{2}} \right\rangle ...\left| {{k}_{{{p}_{m}}}} \right\rangle ...\left| {{k}_{n}} \right\rangle,
\end{aligned}
\end{equation}
which is exactly equal to Eq. \eqref{13}.\par
Note the above method for constructing the module $M_j$ owns the following features:
\begin{enumerate}

\item[(1)] For $j=0^{(n)}$, Eq.\eqref{11} shows $B_0(\alpha_0)=e^{i\alpha_0/\sqrt{2^n}}I$ is just an identity matrix; 

\item[(2)]  For $j$ including only one `1' such that  $P_j=\{p_1\}$, $M_j$ only consists of a single-qubit gate $R_z(-\alpha_j/\sqrt{2^{n-2}};p_1)$;

\item[(3)] The $N$ diagonal matrices $B_j(\alpha_j)$ in Eq. \eqref{10} can be arranged in any order for realizing the target $ D(\overrightarrow{\theta})$ due to their commutativity, i.e., Eq. \eqref{10} can be written as 
\begin{equation}\label{16}
 D(\overrightarrow{\theta})={{B}_{{{s}_{1}}}}({{\alpha }_{{{s}_{1}}}})...{{B}_{{{s}_{N-1}}}}({{\alpha }_{{{s}_{N-1}}}}),
\end{equation}
where the trivial  identity matrix $B_0(\alpha_0)$ is omitted and $\{s_1,s_2,\ldots,s_{N-1}\}$ is an arbitrary order of   $\{00..01,00..10,\\ \cdots,11..11\}$. Remember that  $M_j$  is a circuit module for realizing  $B_j(\alpha_j)$, then the order of all non-trivial  modules $\{M_j:j\in \{0,1\}^n \backslash 0^{(n)} \}$ in the whole quantum circuit $QC_D$ for realizing $D$ can be exchanged at will. 
\end{enumerate}
In the following, we firstly propose a scheme for constructing  $QC_D$ by putting all these $M_j$ into $n$ different groups, and then optimize the circuit by a technique from  binary Gray codes to reduce the CNOT gate-count.

We categorize total $(N-1)$ modules $M_j$ into $n$ groups denoted by $G_{p_m=1},\ldots,G_{p_m=n}$, where the value of $p_m\in[1,n]$ specifies the position of the last `1' bit in $j={j_1}{j_2}...{j_n}$. Then, for each $M_j$ in the group $G_{p_m}$, our construction above  Eq.~\eqref{15} applies a $R_Z(-\beta;p_m)$ gate as well as $2(m-1)$ CNOT gates between the control qubits $\{p_1,p_2,...,p_{m-1}\}$ and the target qubit $p_m$. As a consequence, each $M_j$ in a given $G_{p_m}$ can be uniquely represented by a  $(p_m-1)$ -bit string $j_{1\rightarrow p_m-1}=j_1j_2\ldots j_{p_m-1}$ such that the position of each `1' bit in $j_{1\rightarrow p_m-1}$ indicate the control  of each CNOT gate in $M_j$. In this view, each group $G_{p_m}$ totally has $\sum_{m=1}^{p_m}C_{p_m-1}^{m-1}=2^{p_m-1}$ such $M_j$ and $2\sum_{m=1}^{p_m}{(m-1)C_{p_m-1}^{m-1}}=(p_m-1)2^{p_m-1}$ CNOT gates by simple counting principles, and thus $QC_D=G_{p_m=1}\circ G_{p_m=2}\circ\cdots\circ G_{p_m=n}$ includes $\sum_{p_m=1}^{n} 2^{p_m-1}=2^n-1$ $R_Z$ gates and  $\sum_{p_m=1}^{n}\\{(p_m-1)2^{p_m-1}}=n2^n-2^{n+1}+2$ CNOT gates.  The example with $n=3$ qubits for the general circuit $QC_D$  is presented in Fig. \ref{figure3}(a).
\begin{figure}[ht]
\centering
\includegraphics[scale=0.3]{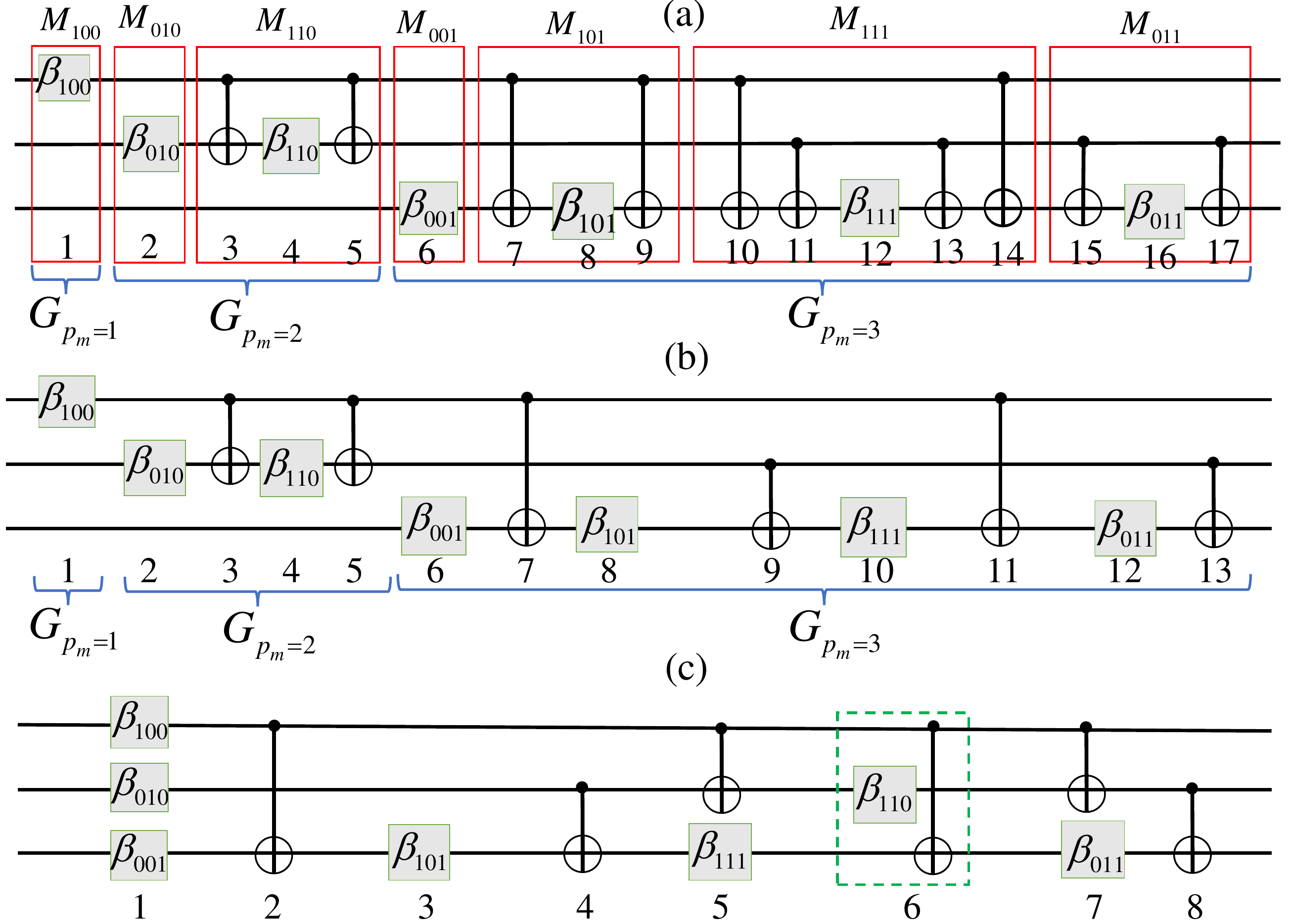}
\caption{ An example with $n$=3 qubits to demonstrate circuit construction and optimization as described in Section~\ref{sec3.2}.
  (a) \{CNOT, $R_Z$\} circuit $Q{{C}_{D}}={{G}_{{{p}_{m}}=1}} \circ {{G}_{{{p}_{m}}=2}}\circ {{G}_{{{p}_{m}}=3}}$ consisting of ${{2}^{n}}-1=7$ $R_Z$ gates and $n{{2}^{n}}-{{2}^{n+1}}+2=10$ CNOT gates with  total depth 15. (b)  Circuit consisting of ${{2}^{n}}-1=7$ $R_Z$ gates and ${{2}^{n}}-2=6$ CNOT gates with  depth 11 after CNOT-count optimization of (a), which can be transformed into (c) with a 
shorter depth 8.}
\label{figure3}
\end{figure}

Next, since all modules $M_j$ in each $G_{p_m}$ commute, here we can explore the reduction in CNOT gate-count by making use of Gray codes \cite{ref034,ref035}. As shown in the transformation from Fig.~\ref{figure3}(a) to \ref{figure3}(b), when two adjacent modules $M_j$ and ${{M}_{{{j}'}}}$ in a  $G_{p_m}$ lead to only one `1' bit in the resultant string $j_{1\rightarrow p_m-1}\oplus {{j}'}_{1\rightarrow p_m-1}$, all but one CNOT gates will cancel between any two consecutive $R_Z$ gates in $M_j$ and ${{M}_{{{j}'}}}$ considering the CNOT commutation rule. Based on this observation, we can arrange each $G_{p_m}$ as a sequence of modules ${M_j}$ with $j$ respectively taken as
\begin{equation}
\label{j_of_Mj}
j=g_110^{(n-p_m)}, g_210^{(n-p_m)},\ldots, g_t10^{(n-p_m)},
\end{equation}
 and  $t=2^{p_m-1}$. Here    $\{g_1,g_2,\ldots,g_t\}=GC_t$ is a $({p_m}-1)$-bit reflected Gray code sequence and can be constructed iteratively for each $p_m$ as:
\begin{subequations}\label{Gray_code}
\begin{align}
   &\text{for}\quad p_m=2:  GC_t = \{0,1\}; \label{gray_code_a}
   \\ &\text{for}\quad p_m\ge 3:  GC_{t} = \{GC_{t/2}\{0\},\overline{GC_{t/2}}\{1\}\}, \label{gray_code_b}
\end{align}
\end{subequations}
where $\overline{GC_{t/2}}$ is the reverse string sequence of $ GC_{t/2}$, $ GC_{t/2}\{0\}$ or  $\overline{GC_{t/2}}\{1\}$  indicates adding a suffix `0' or `1' to each string in $GC_{t/2}$ or  $\overline{GC_{t/2}}$. This Gray code sequence for each $p_m$ leads to that only one CNOT gate exists between two consecutive $R_Z$ gates in each $G_{p_m}$ with its CNOT target being $p_m$. Moreover, from Eq.~\eqref{Gray_code} we can iteratively identify the set of all $2^{p_m-1}$  CNOT gate controls in each $G_{p_m}$ denoted $cc\_ set(p_m)$ as: 

\begin{subequations}\label{cc_set}
\begin{align}
   &\text{for} && p_m=2: \nonumber \\&  && cc\_set(p_m)=[1,1]; \label{cc_set_b} 
   \\ &\text{for} && p_m\ge 3: \nonumber \\ &  &&cc\_set(p_m-1; 2^{p_m-2})= p_m-1, \label{cc_set_c}  \\& &&cc\_set(p_m)  =[cc\_set(p_m-1),cc\_set(p_m-1)], \label{cc_set_d} 
\end{align}
\end{subequations}
where Eq.~\eqref{cc_set_b} indicates the $2^{p_m-2}$th element of $cc\_set(p_m-1)$ is reassigned a value as  $p_m-1$.

In this way,  $G_{p_m=1}$ contains one $R_Z$ gate while each  $G_{p_m\geq 2}$ contains $2^{p_m-1}$ $R_Z$ gates and $2^{p_m-1}$ CNOT gates after CNOT cancellation, and as a  result the whole quantum circuit $QC_D=G_{p_m=1}\circ G_{p_m=2}\circ \cdots \circ G_{p_m=n}$ contains $1+\sum_{p_m=2}^{n}2^{p_m-1}=2^n-1$ $R_Z$ gates and $\sum_{p_m=2}^{n}2^{p_m-1}=2^n-2$ CNOT gates with the total number of gates being  $2^{n+1}-3$, which is proved asymptotically optimal \cite{ref021}. For convenience, we summarize the above circuit construction procedure as Theorem~\ref{circuit_construction}, with an instance circuit for $n=3$ shown in Fig.~\ref{figure3}(b).

\begin{theorem}\label{circuit_construction}
 \textbf{Asymptotically gate-count optimal circuit construction}. An  $n$-qubit $\{CNOT,R_Z\}$ quantum circuit $QC_D$ for implementing a given matrix  $D(\overrightarrow{\theta})$ in Eq.~\eqref{5} can be constructed as a sequence of $n$ gate groups $QC_D=G_{p_m=1}\circ G_{p_m=2}\circ \cdots \circ G_{p_m=n}$,    where $G_{p_m=1}$ is a $R_Z(-\beta_{10^{(n-1)}};1)$ gate and each $G_{p_m \geq 2}$ consists of  an alternating sequence of  $2^{p_m-1}$ $R_Z$ gates acting on qubit $p_m$ and $2^{p_m-1}$ CNOT gates with their targets being $p_m$
 . More precisely, in each $G_{p_m \geq 1}$  the parameter angles $\beta_j$ of these $R_Z(-\beta_j;p_m)$ gates can be solved by first determining $2^{p_m-1}$ different indices $j$ according to Eqs.~\eqref{j_of_Mj} and \eqref{Gray_code} and then using  Eqs.~\eqref{beta_j} and \eqref{solve_alpha_j}, while all CNOT gate controls in each $G_{p_m \geq 2}$ can be determined by using Eq.~\eqref{cc_set} . 
\end{theorem}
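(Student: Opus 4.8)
\noindent\emph{Proof sketch (proposal).}
The plan is to separate a \emph{correctness} part, that the concatenation of the modules $M_j$ does realize $D(\overrightarrow{\theta})$, from a \emph{structural} part, namely the group sizes, the alternating $R_Z$/CNOT pattern, and the explicit descriptions of the angles $\beta_j$ and of $cc\_set(p_m)$. For correctness I would first record that each $M_j$ realizes $B_j(\alpha_j)$: this is exactly the computation displayed in Eq.~\eqref{15}, where the leading block of $m-1$ CNOTs accumulates the parity $k_{p_1}\oplus\cdots\oplus k_{p_m}$ on qubit $p_m$ via Eq.~\eqref{1}, the gate $R_Z(-\beta_j;p_m)$ with $\beta_j$ as in Eq.~\eqref{beta_j} imparts the phase $\exp[i\alpha_j(-1)^{k_{p_1}\oplus\cdots\oplus k_{p_m}}/\sqrt{2^n}]$ via Eq.~\eqref{2}, and the mirrored trailing block uncomputes the parity; comparison with Eq.~\eqref{13} gives $M_j=B_j(\alpha_j)$. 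Then $D(\overrightarrow{\theta})=\prod_j B_j(\alpha_j)$ is the Hadamard-expansion identity Eq.~\eqref{10} with $\alpha_j$ from Eq.~\eqref{solve_alpha_j}, and since the $B_j(\alpha_j)$ pairwise commute, the modules may be concatenated in any order, in particular in the grouped Gray-code order below; the module for $j=0^{(n)}$ is trivial and dropped, and the module for a Hamming-weight-one $j$ is a lone $R_Z$ gate, which covers $G_{p_m=1}$.

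For the structural part I would partition $\{0,1\}^n\setminus\{0^{(n)}\}$ by the position $p_m(j)$ of the last `1' bit of $j$, obtaining $G_{p_m=1},\dots,G_{p_m=n}$. The map $j\mapsto j_1\cdots j_{p_m-1}$ is a bijection from the strings of $G_{p_m}$ onto $\{0,1\}^{p_m-1}$, so $|G_{p_m}|=2^{p_m-1}$. Inside a fixed $G_{p_m}$ all modules commute, so I reorder them as in Eq.~\eqref{j_of_Mj} with $(g_1,\dots,g_t)=GC_t$, $t=2^{p_m-1}$, the reflected Gray code of Eq.~\eqref{Gray_code}. The CNOT-cancellation argument is then: in $M_{j_i}$ both CNOT blocks consist of the gates $\mathrm{CNOT}(c,p_m)$ for $c$ ranging over the `1'-positions of $g_i$, and these mutually commute (common target $p_m$, all controls $\neq p_m$); at the junction between $M_{j_i}$ and $M_{j_{i+1}}$ the trailing block of the former merges with the leading block of the latter, the CNOTs whose control is a `1'-position of both $g_i$ and $g_{i+1}$ cancel in pairs, and exactly those whose control lies in the symmetric difference survive — a single gate, because consecutive reflected-Gray-code words differ in exactly one bit. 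Since $g_1=0^{(p_m-1)}$ has empty leading block and $g_t=0^{(p_m-2)}1$ has Hamming weight one, $G_{p_m}$ collapses to the alternating sequence of $2^{p_m-1}$ gates $R_Z(-\beta;p_m)$ and $2^{p_m-1}$ CNOTs with target $p_m$; summing over the groups yields $2^n-1$ $R_Z$ gates and $2^n-2$ CNOTs.

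It remains to read off the parameters. The angle of the $i$-th $R_Z$ in $G_{p_m}$ is $\beta_{j_i}$ for $j_i=g_i10^{(n-p_m)}$, obtained from Eqs.~\eqref{beta_j} and~\eqref{solve_alpha_j}. The control of the $i$-th surviving CNOT is the (unique) position in which $g_i$ and $g_{i+1}$ differ for $1\le i\le t-1$, and the unique `1' of $g_t$, namely $p_m-1$, for $i=t$; calling this list $cc\_set(p_m)$, I would show by induction on $p_m$ that it equals the transition (``ruler'') sequence of $GC_{2^{p_m-1}}$ followed by the entry $p_m-1$, that this transition sequence is a palindrome, and that $\mathrm{trans}(GC_{2^{p_m-1}})=[\mathrm{trans}(GC_{2^{p_m-2}}),\,p_m-1,\,\mathrm{trans}(GC_{2^{p_m-2}})]$; substituting these facts into one another reproduces precisely the recursion Eq.~\eqref{cc_set}.

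The main obstacle is this last piece of combinatorial bookkeeping: one must track, through the reflection $\overline{GC_{t/2}}$ and the suffix-append convention of Eq.~\eqref{Gray_code}, that the cross-half transition of $GC_{2^{p_m-1}}$ falls on bit $p_m-1$ and that reversing a half leaves its within-half transition sequence unchanged (the palindrome property), since it is exactly these two facts — together with the off-by-one between the $2^{p_m-1}$ entries of $cc\_set(p_m)$ and the $2^{p_m-1}-1$ Gray-code transitions — that make the compact recursion Eq.~\eqref{cc_set} correct. A secondary point worth noting is that the freedom to reorder modules, both within and across groups, never affects correctness, being immediate from commutativity of the $B_j(\alpha_j)$, although it does affect the circuit \emph{depth}, which is the concern of the next result.
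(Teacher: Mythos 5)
Your proposal is correct and follows essentially the same route as the paper: Hadamard-transform decomposition of $D(\overrightarrow{\theta})$ into commuting factors $B_j(\alpha_j)$, parity-compute/uncompute modules $M_j$ verified as in Eq.~\eqref{15}, grouping by the position of the last `1' bit, and reflected-Gray-code ordering to cancel all but one CNOT between consecutive modules. The one place you go beyond the paper is the explicit verification (via the transition sequence of $GC_t$, its palindrome property, and the cross-half flip at bit $p_m-1$) that the recursion Eq.~\eqref{cc_set} produces the surviving CNOT controls — the paper asserts this without proof, so that extra bookkeeping is a welcome addition rather than a deviation.
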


Now we make some comments on above derivation. As comparison, note that besides the common use of similar Gray code techniques, previous methods that synthesize similar gate-count optimal quantum circuits for implementing diagonal unitary matrices are established by employing Lie theory of  commutative matrix group  \cite{ref021} or  Paley-ordered Walsh functions   \cite{ref022}, while  our procedure is simply derived by following  matrix decomposition and the effects of CNOT and $R_Z$ gates on computational basis states.  Also, our formula in Eq.~\eqref{cc_set} to identify  all CNOT gate controls in each $G_{p_m} (p_m=2,3,\ldots,n)$  totally takes time $O(2^n)$, while the previous work would take $O(n2^n)$ XOR operations on binary strings for the same task \cite{ref022} and thus is more time-consuming. 
In addition, it is worth pointing out that our formulated  Theorem~\ref{circuit_construction} is general  for implementing any given diagonal unitary matrix, and there may exist room for further simplification  on some special cases, which will be discussed in Section~\ref{Discussion on  Further}.   

Besides the gate-count of the generated  circuit, it is worth considering the circuit depth as another important circuit cost  metric. For example, our  circuit in Fig. \ref{figure3}(b) has depth 11, which is superior to those  circuits  of depth 12 \cite{ref021} or 13 \cite{ref022}  for $n=3$.  This slight advantage comes from a distinct Gray code used in our procedure, which naturally enables the  parallelization of certain gates in two adjacent $G_{p_m}$. 
More significantly, we discover that such circuits can be further optimized in terms of depth. For example,  circuit with  depth 11 in Fig. \ref{figure3}(b) can be transformed into one with  depth 8 as shown in Fig. \ref{figure3}(c) by using the rule in Fig. \ref{figure2}(c). In the following, we investigate how to nearly halve the depth of a circuit from Theorem~\ref{circuit_construction} by further parallelizing its constituent gates and subcircuits, and then derive an automatic  depth-optimized circuit synthesis algorithm in Section \ref{depth-opti}.

\subsection{Automatic  Depth-Optimized Synthesis Algorithm}\label{depth-opti}
Step by step, in this section we first 
  describe how to put forward a uniform circuit rewriting rule in Theorem~\ref{rule} to significantly reduce the depth of the synthesized \{CNOT, $R_Z$\} circuit $QC_{D}$ obtained from previous  Theorem~\ref{circuit_construction}, and then further derive an automatic synthesis algorithm   denoted \textbf{Algorithm}~\ref{Algorithm1} that can directly produce a depth-optimized circuit for realizing target $D(\overrightarrow{\theta })$. 

Intuitively, the movements of $R_Z$ and CNOT gates or subcircuits within their located rows to fill vacancies of the original circuit by following certain  rules are likely to cause a reduction in circuit depth, such as from Fig.~\ref{figure3}(b) to Fig.~\ref{figure3}(c). In this three-qubit circuit example, we move the subpart CNOT\\$ \circ {R_Z}(-{{\beta}_{110}}) \circ $ CNOT  in columns 3-5 of Fig. \ref{figure3}(b) to the right to fill the vacancies in columns 10-12 according to the rules in Fig. \ref{figure2}(c) and then parallelize ${{R}_{Z}}(-{{\beta }_{100}})$, ${{R}_{Z}}(-{{\beta }_{010}})$, and ${{R}_{Z}}(-{{\beta }_{001}})$  into one column, leading to a depth reduction from 11 to 8 as shown in Fig. \ref{figure3}(c).  A more general instance to illustrate such  depth-optimization procedure is provided in Fig. \ref{figure4}, where we present the 4-qubit circuit constructed by the procedure in Theorem~\ref{circuit_construction}   that consists of four subcircuits ${{G}_{{{p}_{m}}=1,2,3,4}}$ 
with total depth 24. For convenience, each  gate is indexed by its row (1-4) and column (1-29)  number before depth-optimization,  and then we declare the subcircuits ${{G}_{{{p}_{m}}=1}}$, ${{G}_{{{p}_{m}}=2}}$ and ${{G}_{{{p}_{m}}=3}}$ can all be moved and embedded into  appropriate vacancies of ${{G}_{{{p}_{m}}=4}}$ by the following steps:

\begin{enumerate}
\item [(1)]  Move the subcircuit inside the red solid line box in columns 7-13 of ${{G}_{{{p}_{m}}=3}}$ to the right, which can commute with the CNOT gate in column 21 according to Fig. \ref{figure2}(d) and then fill the vacancies in columns 22-28 of ${{G}_{{{p}_{m}}=4}}$;
\item [(2)] Move the gate ${{R}_{Z}}(-{{\beta }_{0010}})$ inside the purple solid line  box in column 6 of ${{G}_{{{p}_{m}}=3}}$ to the vacant position at row 3 and column 14 of ${{G}_{{{p}_{m}}=4}}$;
\item [(3)] Move the subcircuit inside the blue solid line box in columns 3-5 of ${{G}_{{{p}_{m}}=2}}$ to the right, which can commute with the CNOT gate in column 17 according to Fig. \ref{figure2}(c) and then fill the vacancies in columns 18-20 of ${{G}_{{{p}_{m}}=4}}$;
\item [(4)] Move the gate ${{R}_{Z}}(-{{\beta }_{0100}})$ inside the orange solid line  box in column 2 of ${{G}_{{{p}_{m}}=2}}$ to the vacant position at row 2 and column 14 of ${{G}_{{{p}_{m}}=4}}$;
\item [(5)] Move the gate ${{R}_{Z}}(-{{\beta }_{1000}})$ inside the green solid line box in column 1 of ${{G}_{{{p}_{m}}=1}}$ to the vacant position at row 1 and column 14 of ${{G}_{{{p}_{m}}=4}}$.

\end{enumerate}
As a result, the depth of such optimized 4-qubit circuit is the same as that of ${{G}_{{{p}_{m}}=4}}$ and equal to 16.

\begin{figure}[ht]
\centering
\includegraphics[scale=0.33]{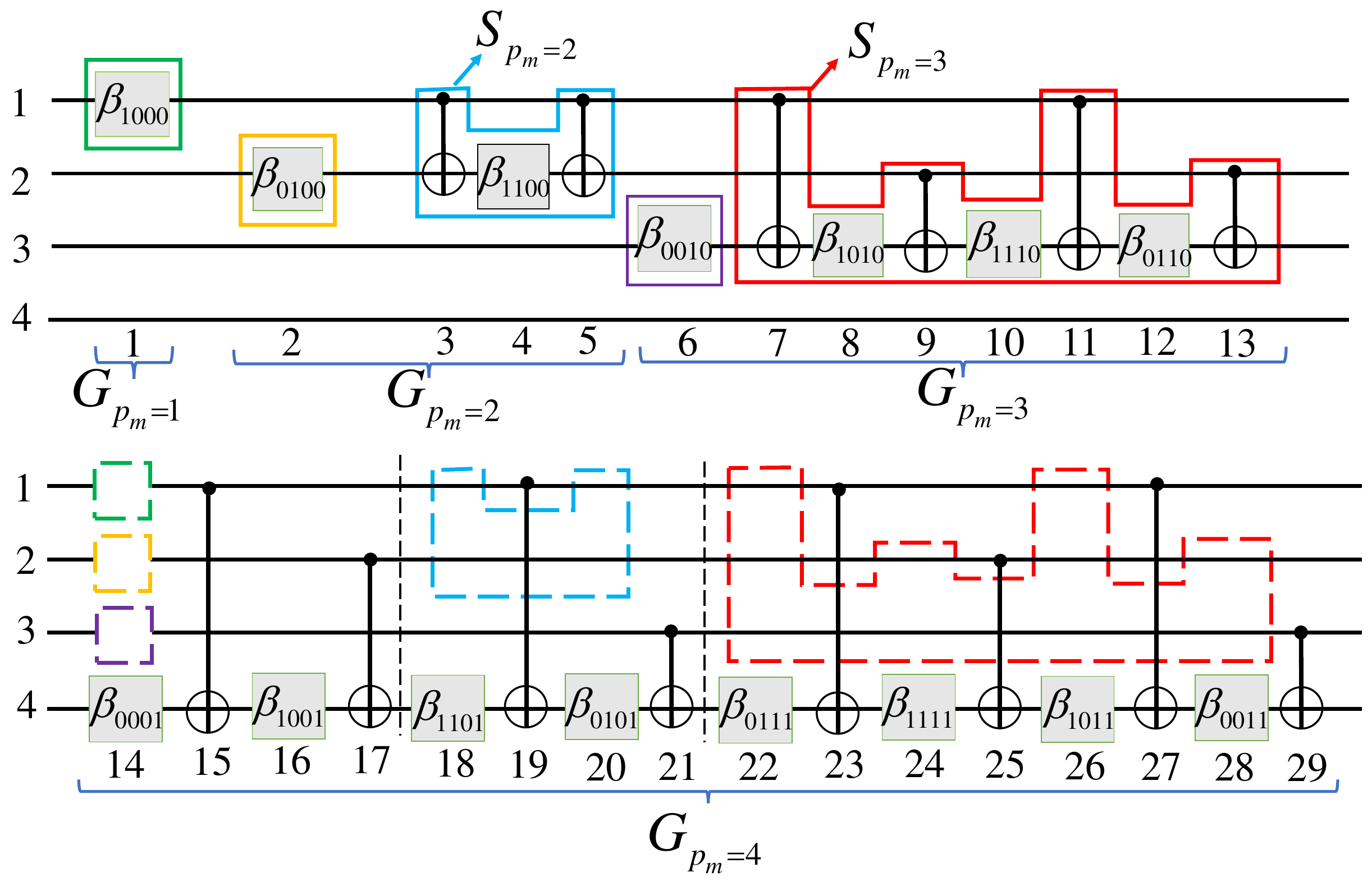}
\caption{  Depth-optimization of the 4-qubit circuit constructed from Theorem~\ref{circuit_construction}. When we move the subcircuits inside colored solid line boxes in ${G_{p_m=1}}$, ${G_{p_m=2}}$, and ${G_{p_m=3}}$ to the right to fill dashed vacant boxes of the same color in ${{G}_{{{p}_{m}}=4}}$,  the overall circuit depth can be reduced from 24 to 16. This is an example to apply Theorem~\ref{rule}.}
\label{figure4}
\end{figure}

In above steps (1) and (3), the use of  rewriting rules in  Figs.~\ref{figure2}(d) and \ref{1}(c) can reduce the whole circuit depth by commuting and parallelizing quantum gates in a collective manner. In the following we formulate a uniform version of such circuit rewriting rules as Theorem~\ref{rule}, which would be well-suited for optimizing circuits obtained from Theorem~\ref{circuit_construction} with any system size $n$.

\begin{theorem}\label{rule}
 \textbf{Uniform circuit rewriting rule}.  In an $n$-qubit quantum circuit, a subcircuit $S_{p_m}$ consisting of an alternating sequence of  $2^{p_m-1}$ CNOT gates with their controls determined by Eq.~\eqref{cc_set} and targets being $p_m$ and $2^{p_m-1}-1$ $R_Z$ gates acting on qubit $p_m$ can commute with a CNOT$(p_m,n)$ gate for $p_m=2,3,\ldots,n-1$. An  example of this rule with $n$=4 is shown in Fig.~\ref{figure4}, where  the subcircuit $S_{p_m=2}$ or  $S_{p_m=3}$ inside a blue or red solid line box can commute with the CNOT gate in column 17 or 21, respectively. 

\end{theorem}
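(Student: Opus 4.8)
The plan is to prove the rule by showing that, as an $n$-qubit operator, $S_{p_m}$ is diagonal in the computational basis with a phase factor that depends only on the values carried by qubits $1,\dots,p_m$, and then to observe that $\mathrm{CNOT}(p_m,n)$ commutes with every such operator because it leaves those qubit values untouched. First I would trace the action of $S_{p_m}$ on an arbitrary basis state $\ket{k_1}\cdots\ket{k_n}$. Every gate of $S_{p_m}$ is either a $\mathrm{CNOT}$ targeting qubit $p_m$ with control in $\{1,\dots,p_m-1\}$ or an $R_Z(-\beta;p_m)$ gate, so qubits $1,\dots,p_m-1$ and $p_m+1,\dots,n$ are never modified; only the register on qubit $p_m$ evolves, and it does so deterministically because its control bits $k_{c_i}$ are held fixed. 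Walking through the alternating string $\mathrm{CNOT}\circ R_Z\circ\cdots\circ R_Z\circ\mathrm{CNOT}$ — which, having $2^{p_m-1}$ CNOTs and $2^{p_m-1}-1$ rotations, starts and ends with a CNOT — one sees that after the $s$-th CNOT qubit $p_m$ holds $k_{p_m}\oplus k_{c_1}\oplus\cdots\oplus k_{c_s}$, so the $s$-th rotation contributes the scalar $\exp[\,i\beta_s(-1)^{k_{p_m}\oplus k_{c_1}\oplus\cdots\oplus k_{c_s}}/2\,]$, manifestly a function of $k_1,\dots,k_{p_m}$ alone and independent of the particular angle values.

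The one substantive point is that the bit XORed into qubit $p_m$ by the whole of $S_{p_m}$ is $0$, i.e. $\bigoplus_{c\in cc\_set(p_m)}k_c=0$ for every input; this is what makes qubit $p_m$ return to $k_{p_m}$ and hence makes $S_{p_m}$ diagonal. It reduces to the combinatorial claim that every control value occurs an even number of times in the multiset $cc\_set(p_m)$, which I would establish by induction on $p_m$ from the recursion~\eqref{cc_set}: the base case $cc\_set(2)=[1,1]$ is immediate, and the inductive step~\eqref{cc_set_d} concatenates the (possibly once-modified, by~\eqref{cc_set_c}) list $cc\_set(p_m-1)$ with a copy of itself, which doubles every multiplicity and so keeps it even. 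With this, $S_{p_m}\ket{k_1\cdots k_n}=e^{i\phi(k_1,\dots,k_{p_m})}\ket{k_1\cdots k_n}$ for the phase $\phi$ read off above, so $S_{p_m}$ is diagonal and acts as the identity on qubits $p_m+1,\dots,n$.

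Finally I would conclude by a one-line basis-state check valid exactly when $2\le p_m\le n-1$, so that $\mathrm{CNOT}(p_m,n)$ is well defined with control $p_m$ inside the support $\{1,\dots,p_m\}$ of $S_{p_m}$ and target $n$ outside it: the gate $\mathrm{CNOT}(p_m,n)$ sends $\ket{k_1\cdots k_n}$ to $\ket{k_1\cdots k_{n-1}}\ket{k_n\oplus k_{p_m}}$, fixing all of $k_1,\dots,k_{p_m}$, whence $\mathrm{CNOT}(p_m,n)\,S_{p_m}\ket{k}=e^{i\phi(k_1,\dots,k_{p_m})}\ket{k_1\cdots k_{n-1}}\ket{k_n\oplus k_{p_m}}=S_{p_m}\,\mathrm{CNOT}(p_m,n)\ket{k}$, the last equality because $\phi$ is insensitive to qubit $n$. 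Holding on every basis state, this yields $S_{p_m}\,\mathrm{CNOT}(p_m,n)=\mathrm{CNOT}(p_m,n)\,S_{p_m}$, which is the assertion; it also recovers the rule in Fig.~\ref{figure2}(c)-(d) as the special case $p_m=2$. The argument has no deep obstacle; the points that need care are the index bookkeeping in the even-multiplicity lemma — verifying that the entry reassigned in~\eqref{cc_set_c} is the last of $cc\_set(p_m-1)$ and that its reassignment does not spoil the parity — and keeping the statement uniform in $n$ and free of any assumption on the rotation angles.
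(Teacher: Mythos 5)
Your proof is correct, but it takes a genuinely different route from the paper's. The paper argues syntactically, at the level of circuit rewriting: it notes (as you do) that Eqs.~\eqref{cc_set_b} and \eqref{cc_set_d} force every control to occur an even number of times in $cc\_set(p_m)$, and then commutes $\mathrm{CNOT}(p_m,n)$ through $S_{p_m}$ gate by gate using the local rules of Figs.~\ref{figure2}(a) and \ref{figure2}(b) --- each $\mathrm{CNOT}(c_i,p_m)$ passed spawns an extra $\mathrm{CNOT}(c_i,n)$, and the even multiplicities guarantee these spawned gates cancel in pairs. You instead argue semantically: you compute the action of $S_{p_m}$ on a basis state, use the same even-multiplicity lemma to show the accumulated XOR on qubit $p_m$ vanishes so that $S_{p_m}$ is diagonal with a phase depending only on $k_1,\dots,k_{p_m}$, and then observe that $\mathrm{CNOT}(p_m,n)$ is a basis permutation fixing those bits. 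The two proofs hinge on the identical combinatorial fact, so the substance is shared; what your version buys is a strictly stronger and angle-independent conclusion (namely that $S_{p_m}$ commutes with \emph{any} gate that fixes the computational-basis values of qubits $1,\dots,p_m$, not just $\mathrm{CNOT}(p_m,n)$), while the paper's version stays within the advertised rewriting framework and exhibits the commutation as an explicit sequence of local circuit moves, which is what the depth-optimization procedure actually executes. One cosmetic remark: in your even-multiplicity lemma no induction is really needed, since the concatenation in Eq.~\eqref{cc_set_d} of a list with itself makes every multiplicity even regardless of the reassignment in Eq.~\eqref{cc_set_c}; your phrasing ``keeps it even'' slightly undersells this.
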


\begin{proof}
It can be seen from Eq.~\eqref{cc_set_b} and ~\eqref{cc_set_d} that each different CNOT control must appear an even number of times inside any $S_{p_m}$. Therefore, such a sequence $S_{p_m}$ consisting of alternating CNOT and $R_Z$ gates clearly commutes with a  CNOT($p_m,n$) gate by alternately using  the commutation rules in Figs.~\ref{figure2}(a) and \ref{figure2}(b), where all additional CNOT gates would cancel out.
\end{proof}

Based on Theorem~\ref{rule}, we can develop a depth-optimization procedure
for reducing the depth of the $n$-qubit circuit  from Theorem~\ref{circuit_construction}. By generalizing Fig. \ref{figure4} to the  circuit of any size $n$ consisting of subcircuits $\{G_{p_m}:p_m=1,2,\ldots,n\}$, all  ${{2}^{{{p}_{m}}}}$ gates in each  ${{G}_{1<{{p}_{m}}<n}}$ can be divided into two subparts: (i) its leftmost ${{R}_{Z}}$ gate, and (ii) the rest  ${{2}^{{{p}_{m}}-1}}$ CNOT and ${{2}^{{{p}_{m}}-1}-1}$ ${{R}_{Z}}$ gates together denoted $S_{p_m}$. At first, the subpart $S_{p_m}$ of  $G_{p_m=n-1}$ can commute with all gates on the left of 
the leftmost $\text{CNOT}(n-1,n)$ gate  in ${{G}_{{{p}_{m}}=n}}$ by noting Eqs.~\eqref{cc_set_c} and ~\eqref{cc_set_d}, and then commute with this CNOT$(n-1,n)$ gate according to Theorem~\ref{rule}  to exactly fill vacancies on its right. Next, the subpart (i) of ${{G}_{{{p}_{m}}=n-1}}$ as a single ${{R}_{Z}}$ gate can be moved to the vacant position at row $n-1$ and the first column of ${{G}_{{{p}_{m}}=n}}$. Similarly, the subpart $S_{p_m}$ of ${{G}_{{{p}_{m}}=n-2}}$ can be moved to right and filled the vacancies on the right of the leftmost CNOT($n-2,n$) gate of $G_{p_m=n}$ according to Theorem~\ref{rule}, and then the subpart (i) of ${{G}_{{{p}_{m}}=n-2}}$ as a single ${{R}_{Z}}$ gate  can be moved to the vacant position at row $n-2$ and the first column of ${{G}_{{{p}_{m}}=n}}$.  In this way, all these subcircuits ${{G}_{{p}_{m}}}$ with $p_m=n-1,n-2,\ldots,3,2$ can be regularly moved and embedded into corresponding vacancies of ${{G}_{{{p}_{m}}=n}}$ one after another, and at last  $G_{p_m=1}$ as a single $R_Z$ gate can be moved to the  position at row 1 and the first column of $G_{p_m=n}$.         
As a final result, the depth of such obtained $n$-qubit circuit  is equal to that of
${{G}_{{{p}_{m}}=n}}$, that is,  ${{2}^{n}}$, which nearly halves the circuit depth ${{2}^{n+1}}-3$ resulted from previous Welch's method \cite{ref022}.

At this point, one can construct a circuit by  Theorem~\ref{circuit_construction} and then use  Theorem~\ref{rule} for optimizing the circuit depth. More significantly, we find  these two steps can be further combined to give a new synthesis algorithm that achieves the optimized depth automatically once the circuit is generated.

 Note the essence of using Theorem~\ref{rule} for optimizing the depth of a circuit from Theorem~\ref{circuit_construction} is to move  gates in its  smaller subcircuits ${{G}_{{{p}_{m}}}}$ with ${{p}_{m}}=1,2,...,n-1$ into specific vacancies of the rightmost  subcircuit ${{G}_{{{p}_{m}}=n}}$, indicating that the positions of all these gates in the final  optimized circuit can actually be  predetermined. Based on this key  observation, here we present  \textbf{Algorithm}~\ref{Algorithm1} as the central  contribution of this paper, which  can directly   identify-and-embed CNOT and $R_Z$ gates in each $G_{p_m}(p_m=1,2,\ldots,n)$ for piecing up the whole $n$-qubit circuit of depth $2^n$. Specifically, we initialize the desired circuit $QC_D$ as a one
 consisting of $n$ rows and $2^n$ columns of vacancies, and then identify and embed each $G_{p_m<n}$ consisting of a $R_Z$ gate and a subpart $S_{p_m<n}$  followed by the final $G_{p_m=n}$ into $QC_D$.

\begin{algorithm}[ht]
\label{Algorithm1}
\caption{Depth-Optimized \{CNOT, $R_Z$\} Circuit  Synthesis for implementing   $D(\vec{\theta})$ in Eq.~\eqref{5}.}
\LinesNumbered
\KwIn{A target diagonal operator $D(\vec{\theta})$ in Eq. \eqref{5}.}
\KwOut{A depth-optimized \{CNOT,$R_Z$\} circuit $QC_D$ for realizing $D(\vec{\theta})$.}
Calculate all $2^n-1$ rotation angles $\beta_j$ using Eqs. \eqref{beta_j} and \eqref{solve_alpha} or \eqref{solve_alpha_j}, $QC_D \leftarrow \text{Vacancy}(n,2^n)$, $cc\_set \leftarrow [0]$, $GC_2 \leftarrow \{0,1\}$\;
\For(\tcp*[h]{Embed $n-1$ $R_Z$ gates in column 1  of $QC_D$ at first.}){$r=1$ \KwTo $n-1$}
{
    Embed $R_Z(-\beta_{0^{(r-1)}10^{(n-r)}})$ in $QC_D(r,1)$\;
}
\For{$p_m = 2$ \KwTo $n$}
{
    $t \leftarrow 2^{p_m-1}$\;
    $cc\_set(t/2) \leftarrow p_m-1$\;
    $cc\_set \leftarrow [cc\_set, cc\_set]$; \tcp{Identify all  CNOT gate controls in  $S_{p_m>1}$.}
    \If(\tcp*[h]{Identify and Embed gates of $S_{p_m<n}$ defined in Theorem~\ref{rule}.}){$p_m<n$}
    {
        Embed a CNOT in $QC_D(1,p_m,2^{p_m}+1)$\;
        \For{$i=2$ \KwTo $t$}
        {
            $j \leftarrow GC_t(i)10^{(n-p_m)}$\;
            Embed $R_Z(-\beta_j)$ in $QC_D(p_m,2^{p_m}+2i-2)$\;
            Embed a CNOT in $QC_D(cc\_set(i),p_m,2^{p_m}+2i-1)$\;
        }
        \tcp{Generate Gray code sequence.}
        $GC_{2t} \leftarrow \{GC_t\{0\},\overline{GC_t}\{1\}\}$; \tcp{$\overline{GC_t}$:reverse of $GC_t$.}
    }
}
\For(\tcp*[h]{Identify and Embed gates of $G_{p_m=n}$ finally.}){$i=1$ \KwTo $2^{n-1}$}
{
    $j \leftarrow GC_{2^{n-1}}(i)1$\;
    Embed $R_Z(-\beta_j)$ in $QC_D(n,2i-1)$\;
    Embed a CNOT in $QC_D(cc\_set(i),n,2i)$\;
}
\Return $QC_D$.
\end{algorithm}

To illustrate the working principle of \textbf{Algorithm}~\ref{Algorithm1} in a more intuitive way,  we demonstrate the four-qubit  depth-optimized  circuit synthesis  for  implementing a diagonal matrix 
$D(\vec{\theta}=[\theta_{0000},\ldots,\theta_{1111}])$  as an example depicted in Fig.~\ref{figure5} with a description as follows:
\begin{enumerate}\label{example_alg_1}
\item[(1)] In Fig.~\ref{figure5}(a), we initialize the 4-qubit circuit  $QC_D$ as consisting of 4 rows and 16 columns of vacancies, and calculate all rotation angles denoted $[\beta_{0001},\ldots,\beta_{1111}]$ of 15 non-trivial $R_Z$ gates from the given $ [\theta_{0000},\ldots,\theta_{1111}]$. We also initialize   the CNOT gate control set as $cc\_set \leftarrow [0]$ and the 1-bit Gray code sequence $GC_2 \leftarrow \{0,1\}$ according to Eqs.~\eqref{gray_code_a}. 
\item[(2)] In Fig.~\ref{figure5}(b), we embed three $R_Z$ gates denoted $R_Z(-\beta_{1000})$, $R_Z(-\beta_{0100})$, $R_Z(-\beta_{0010})$  into row 1, 2, 3  and column 1 of $QC_D$ as marked green, orange, purple,  respectively. Note each such $R_Z$ gate belongs to  $G_{p_m=1,2,3}$ , respectively.

\item[(3)] In Fig.~\ref{figure5}(c) , we  construct the subcircuit $S_{p_m=2}=\text{CNOT}\\(1,2)\circ R_Z(-\beta_{1100};2) \circ \text{CNOT}(1,2)$ by identifying all CNOT gate controls as  $cc_{set}=[1,1]$ and the parameter angle of the $R_Z$ gate via $GC_2$,  and then embed $S_{p_m=2}$ into columns 5-7 of $QC_D$  with gates marked blue. Also, we generate the 2-bit Gray code sequence $GC_4=\{00,10,11,01\}$.

\item[(4)] In Fig.~\ref{figure5}(d) , we  construct the  subcircuit $S_{p_m=3}=$ CNOT\\(1,3) $\circ R_z(-\beta_{1010};3)\circ$
 CNOT(2,3) 
$\circ R_z(-\beta_{1110};3)\circ$
CNOT\\(1,3) 
$\circ R_z(-\beta_{0110};3)\circ$
 CNOT(2,3) by identifying all CNOT gate controls  as $cc_{set}=[1,2,1,2]$ and parameter angles of three $R_Z$ gates via $GC_4$ , 
and then embed $S_{p_m=3}$ into columns  9-15 of $QC_D$ with gates marked red. Also, we generate the 3-bit Gray code sequence $GC_8=\{000,100,\\110,010,011,111,101,001\}$.

\item[(5)] Finally, in Fig.~\ref{figure5}(e) we  construct the  subcircuit $G_{p_m=4}=$ $R_z(-\beta_{0001};4)\circ$ CNOT(1,4) $\circ R_z(-\beta_{1001};4)\circ$
 CNOT(2,4) 
$\circ R_z(-\beta_{1101};4)\circ$
CNOT(1,4) 
$\circ R_z(-\beta_{0101};4)\circ$
 CNOT(3,4) $R_z(-\beta_{0111};4)\circ$ CNOT(1,4) $\circ R_z(-\beta_{1111};4)\circ$
 CNOT(2,4) 
$\circ R_z(-\beta_{1011};4)\circ$
CNOT(1,4) 
$\circ R_z(-\beta_{0011};4)\circ$
 CNOT(3,4)  by identifying all CNOT gate controls  as $cc_{set}=[1,2,1,3,\\1,2,1,3]$ and parameter angles of eight  $R_Z$ gates via $GC_8$ , 
and then embed $G_{p_m=4}$ into columns  1-16 of $QC_D$ with gates marked black. As a result, this circuit in Fig.~\ref{figure5}(e) can realize any diagonal unitary matrix of size $16\times 16$.
 
\end{enumerate}

Similar to above example with $n=4$, we can use \textbf{Algorithm}~\ref{Algorithm1} to synthesize a \{CNOT, $R_Z$\} circuit of depth at most $2^n$ to realize any diagonal unitary matrix given in Eq.~\eqref{5}, which thus can achieve a nearly 50\% depth reduction compared with  Ref.~\cite{ref022} for the general  case when  all parameter angles of $R_Z$ gates are non-zero.   

\begin{figure}[ht]
\centering
\includegraphics[scale=0.33]{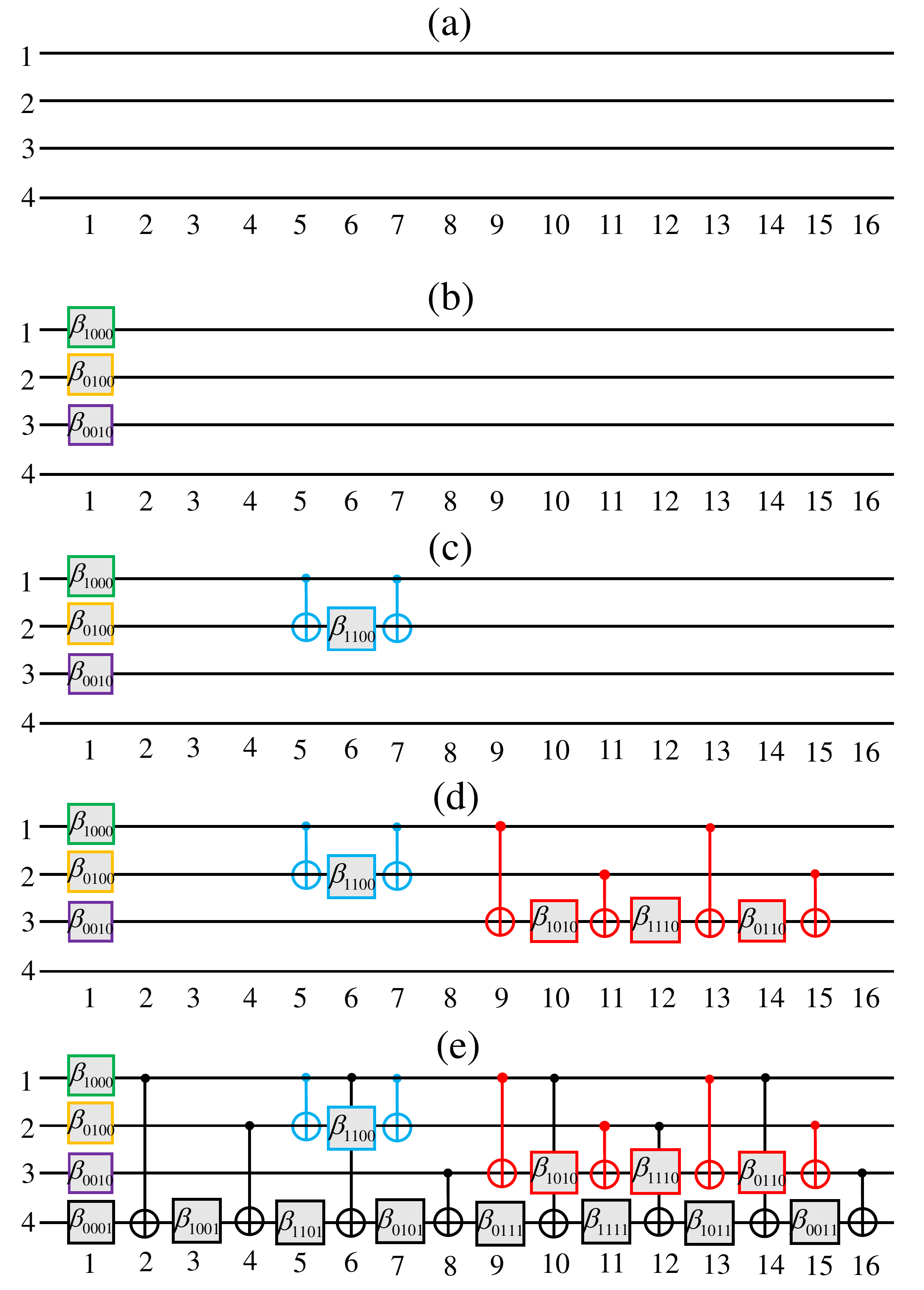}
\caption{Four-qubit depth-optimized circuit synthesis as an example to demonstrate \textbf{Algorithm}~\ref{Algorithm1}. In (a) we initialize the circuit as one consisting of 4 rows and 16 columns of vacancies, and then successively identify and embed three $R_Z$ gates marked greeen, orange and purple,  $S_{p_m}=2$ with  3 gates marked blue, $S_{p_m}=3$ with 7 gates marked red, and $G_{p_m=4}$ with 16 gates marked black to achieve (b), (c), (d), and the final desired circuit $QC_D$ in (e). }
\label{figure5}
\end{figure}

\textbf{Complexity analysis.}  We analyze the complexity of \textbf{Algorithm}~\ref{Algorithm1} here. At first, the procedure to calculate all $2^n-1$ rotation angles $\beta_j$ can be practically performed in two ways: (i) using Eq. \eqref{solve_alpha} to first obtain the $n$-qubit matrix $\widetilde{H}=H^{\otimes n}$ and then do matrix-vector multiplication with $O(4^n)$ time complexity and $O(4^n)$ space complexity, or (ii) using Eq. \eqref{solve_alpha_j} with $O(qn2^n)$ time complexity and $O(2^n)$ space complexity such that  $q\in [1,2^n]$  is the number of non-zero angle values in the given $\vec{\theta}$. The comparison between (i) and (ii)  indicates that we can selectively perform the first step of \textbf{Algorithm 1} with Eq.~\eqref{solve_alpha} or Eq.~\eqref{solve_alpha_j} depending on the input values of  $\vec{\theta}$ and  available time and space resources (see examples in Section \ref{sec-exp}). Next, 
the rest procedure that involves the generation of CNOT control positions and Gray code sequences for identifying and embedding all $ R_Z $ and CNOT gates to compose $QC_D$ has total $O(n2^n)$ time  complexity. 

\subsection{Discussion on  Further Optimization}
\label{Discussion on  Further}
As mentioned in Section~\ref{syn_algo}, a circuit synthesis algorithm can have different performances on cases with different structures.  For a more comprehensive study, here we have some discussions of possible further gate-count/depth optimization in terms of special cases besides the general case.
    Although the circuits obtained in \textbf{Algorithm}
\ref{Algorithm1} hold for realizing general $D(\vec{\theta})$ with  the asymptotically optimal gate-count, it is worth noting that the number of required gates for implementing specific matrices may be further reduced. For example, if $D(\vec{\theta})$ is given by $\vec{\theta}=\left[0,0,0,0,0,0,\pi,\pi\right]$, then the four $R_Z$ gates in columns $1, 3, 5, 7$ of the synthesized circuit in Fig. \ref{figure3}(c) have rotation angle values $\beta_{001}=\beta_{101}=\beta_{111}=\beta_{011}=0$ and thus can be removed as identity matrices. Accordingly, the four CNOT gates in columns $2, 4, 6, 8$  can  be cancelled by noting Fig. \ref{figure2}(c).  

 Considering the structure of our circuits synthesized from \textbf{Algorithm}~\ref{Algorithm1}, a simple procedure that first removes  all $R_Z$ gates with $\beta=0$  and then implements the CNOT gate cancellation in  Fig.~\ref{figure2}(a) and Fig. \ref{Fig4} is usually effective for further reducing  the gate count as well as circuit depth. Later, we will show how to apply the combination of our \textbf{Algorithm}~\ref{Algorithm1} and optimization techniques described here to a   practical use case in Section~\ref{QAOA_exp}.

\begin{figure}[ht]
   \centering
   \includegraphics[scale=0.5
    ]{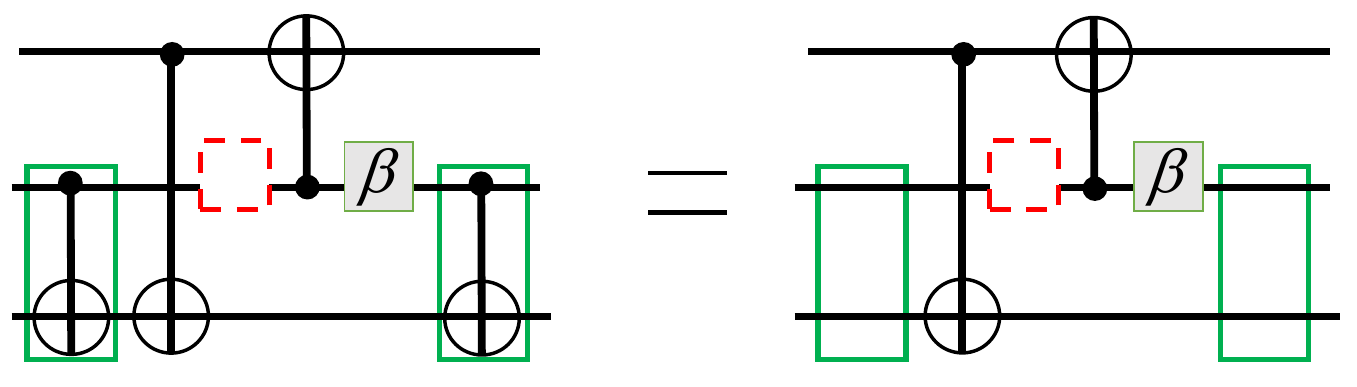}
   \caption{ Simplification techniques for cancelling two  CNOT gates in green boxes. The dashed red box generally indicates any subcircuit that commutes with the CNOT gate in green box.}
   \label{Fig4}
\end{figure}

\section{Experimental Evaluations}\label{sec-exp}

In above Section~\ref{depth-opti} we have theoretically revealed the  circuit synthesized from  \textbf{Algorithm}~\ref{Algorithm1} can exhibit a depth reduction. To evaluate the  practical  performance of our depth-optimized circuit synthesis algorithm, including the circuit complexity and runtime, in this section we apply \textbf{Algorithm}~\ref{Algorithm1} to a general case (the random diagonal operator) and a  specific use case (the QAOA circuit). All experiments are performed using MATLAB R2021a with  AMD Ryzen 7 4800H CPU (2.9 GHz, 8 cores) and 16GB RAM.
\subsection{Random Diagonal Unitary Operators}
In principle, our Algorithm~\ref{Algorithm1} can generate a quantum circuit for implementing any give diagonal unitary matrix. Without loss of generality, we investigate the synthesis of random diagonal matrices $D(\vec{\theta})$ such that $N=2^n$ parameter angles of $\vec{\theta}$ are uniformly distributed random variables in the interval $(0,2\pi)$. In particular, such random diagonal unitaries may have an application in a quantum informational task called unitary 2-designs \cite{2017design}.

According to the complexity analysis of our synthesis algorithm in Section~\ref{depth-opti}, we adopt Eq.~\eqref{solve_alpha}   for performing \textbf{Algorithm}~\ref{Algorithm1}  aimed at 300 random matrices $D(\vec{\theta})$ and obtain target circuits   with an average depth $d_1(n)$ for $2 \leq n\leq 14$ shown in 2nd column of Table \ref{table1}. Note the use of Eq.~\eqref{solve_alpha} would encounter the limitation on available RAM space for $n\geq 15$.  For the synthesis of larger-scale circuits, we can instead use Eq. (\ref{solve_alpha_j}) for performing \textbf{Algorithm}~\ref{Algorithm1} and the results for $n=15$ and 16 in this way are  presented as well. Also, the average runtimes $t_1$ for each size $n$ are reported in the 3rd column of Table \ref{table1}. For comparison, we also  employ Welch' method \cite{ref022} to construct $n$-qubit \{CNOT, $R_Z$\}  circuits with the average depth  denoted $d_2(n)$ and  rumtime denoted $t_2$ listed in 4th and 5th column of Table \ref{table1}, respectively. 

\begin{table}
\centering
\begin{tabular}{|l|l|l|l|l|}
\hline
$n$                 & $d_1$     & $t_1$(sec)          & $d_2$     & $t_2$(sec)            \\ \hline
2                   & 4         & <0.001            & 5         & <0.001                        \\ \hline
3                   & 8        & <0.001            & 13         & 0.001                   \\ \hline
4                   & 16        & <0.001              & 29        & 0.002                    \\ \hline
5                   & 32        & 0.002            & 61        & 0.004                  \\ \hline
6                   & 64       & 0.004            & 125        & 0.010                    \\ \hline
7                   & 128       & 0.008            & 253       & 0.024                \\ \hline
8                   & 256       & 0.015            & 509       & 0.068                 \\ \hline
9                   & 512      & 0.029            & 1021       & 0.214                  \\ \hline
10                  & 1024      & 0.062            & 2045      & 0.697                  \\ \hline
11                  & 2048      & 0.123            & 4093      & 2.653                  \\ \hline
12                  & 4096      & 0.249            & 8189      & 10.249                    \\ \hline
13                  & 8192     & 0.616            & 16381      & 40.166                  \\ \hline
14                  & 16384     & 1.791            & 32765     & 154.478                   \\ \hline
15                  & 32768     & 627.493           & 65533     & 631.203
                          \\ \hline
16                  & 65536    & 2420.771          & 131069     & 2508.719                       \\ \hline
\end{tabular}
\caption{Performances of our \textbf{Algorithm}~\ref{Algorithm1} and Welch's method \cite{ref022}  to synthesize circuits   realizing 300 random diagonal unitary operators for $2\leq n \leq 16$, with their average circuit depths and runtimes denoted as $(d_1,t_1)$ and $(d_2,t_2)$, respectively.
 Note that we adopt Eq. \eqref{solve_alpha}
 for $2\leq n \leq 14$ and    Eq.~\eqref{solve_alpha_j} for $n=15,16$ to run \textbf{Algorithm}~\ref{Algorithm1}, respectively. }
\label{table1}
\end{table}

  Intuitively, the red curve in Fig.~\ref{Fig5} shows that our circuits achieve  substantial
reductions in circuit depth compared with Welch's method (that is, $1-d_1(n)/d_2(n)$) from 20\% to 49.999\% as $n$ increases from 2 to 16. 
Besides, Table~\ref{table1} indicates our synthesis algorithm also consumes less runtime than that of Welch's method,  since they adopt a formula similar to Eq.~\eqref{solve_alpha_j} to compute all rotation angles for $2\leq n \leq 16$ and perform XOR operations to identify CNOT gate controls as mentioned in Section~\ref{sec3.2}, which are time-consuming procedures.

\begin{figure}[ht]
   \centering
   \includegraphics[scale=0.3
    ]{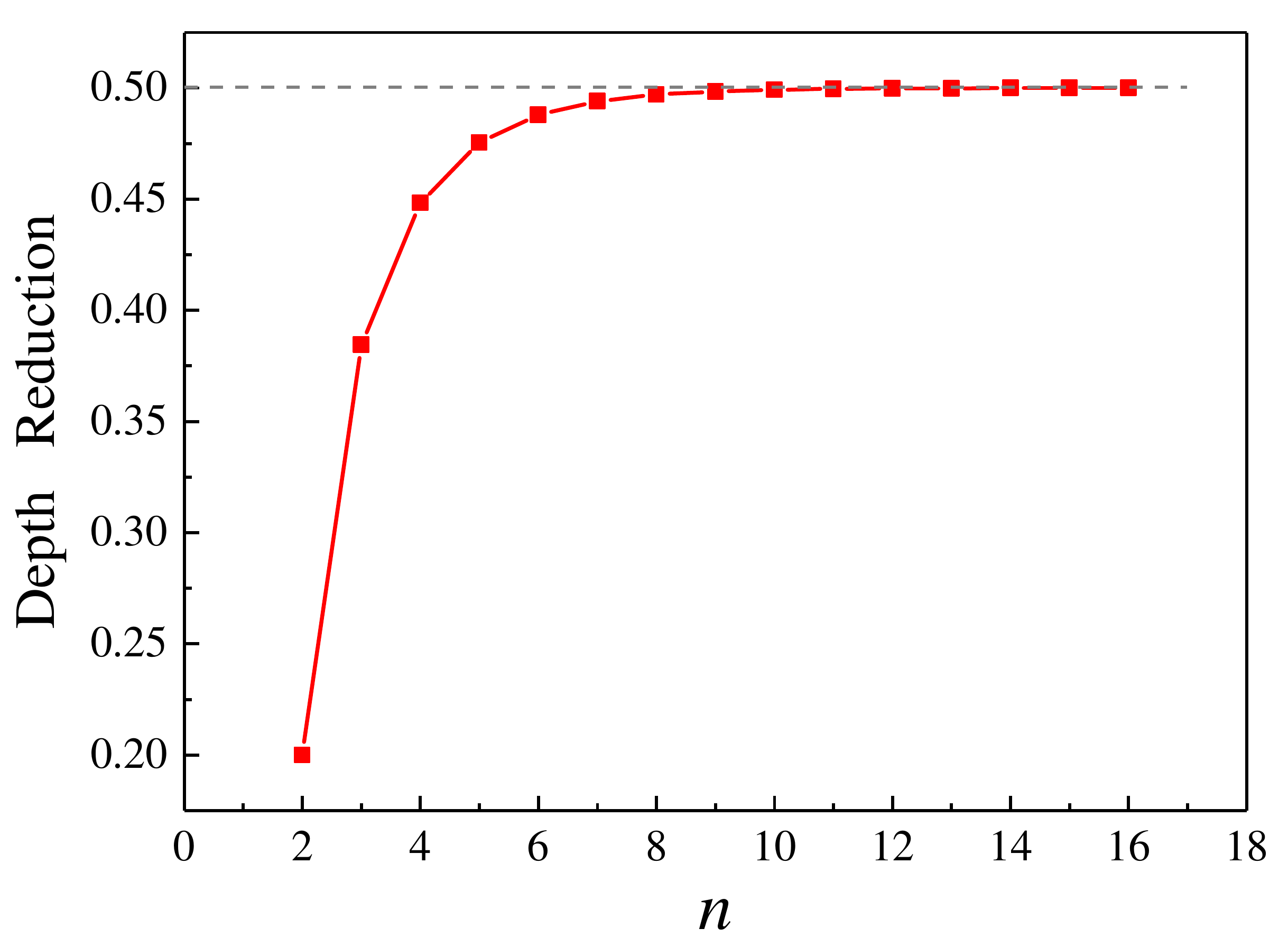}
   \caption{Depth reduction by comparing the circuits synthesized from our  \textbf{Algorithm }\ref{Algorithm1} with Welch's method \cite{ref022} for realizing random diagonal unitary operators, which is calculated from Table~\ref{table1} as $1-d_1/d_2$ and 
   approaches nearly 50\%  as the system size  $n$ increases to 16 qubits.  }
   \label{Fig5}
\end{figure}

\subsection{QAOA Circuits on Complete Graphs}
\label{QAOA_exp}

Quantum Approximate Optimization Algorithm (QAOA)  is one of the most promising  quantum algorithms in the NISQ era \cite{Farhi2014QAQA,RevModPhys2021}, which is suited for solving combinatorial optimization problems. Here we investigate the $n$-qubit QAOA circuit that generates QAOA ansatz state for MaxCut problem on an $n$-node complete graph as shown in Ref.\cite{2021QAOA1}, where the internal part $QC_D$, sandwiched between Hadamard and $R_X$ gates, consists of a series of $(n^2-n)/2$ subcircuits  as
\begin{equation}
    \label{QAOA_sub}
    SC(\gamma;c,t)=\text{CNOT}(c,t)\circ R_z(-2\gamma ; t) \circ \text{CNOT}(c,t),
\end{equation}
each of which acts on qubits $c$ and $t$ with  $1\leq c<t \leq n$ and has an effect on the basis state $\ket{k}$ as
\begin{equation}
    \label{SC_act}
  \ket{k}=\ket{k_1}\ldots \ket{k_c}\ldots \ket{k_t}\ldots \ket{k_n} \rightarrow e^{i (-1)^{k_c \oplus k_t} \gamma}\ket{k}. 
\end{equation}
As a whole, we have the main subcircuit of QAOA as
\begin{align}
    \label{QAOA_diagonal}
    QC_D(\gamma)=& SC(\gamma;1,2)\circ SC(\gamma;1,3) \circ \cdots \circ SC(\gamma;1,n) \circ \nonumber\\ & SC(\gamma;2,n)\circ SC(\gamma;2,n-1) \cdots \circ  SC(\gamma;n-1,n),
\end{align}
which can realize a diagonal matrix $D(\gamma)$ as
\begin{equation}
\label{D_gamma}
  D(\gamma)=\sum_{k\in \{0,1\}^{n} }{e^{i \theta_ k}\ket{k}\bra{k}}
\end{equation}
with the parameter angle
\begin{equation}
\label{theta_k}
 \theta_k=\gamma \sum_{1\leq c<t \leq n}{(-1)^{k_c \oplus k_t}}
\end{equation}
by using Eq.~\eqref{SC_act}. 

\begin{figure}[ht]
   \centering
   \includegraphics[scale=0.35
    ]{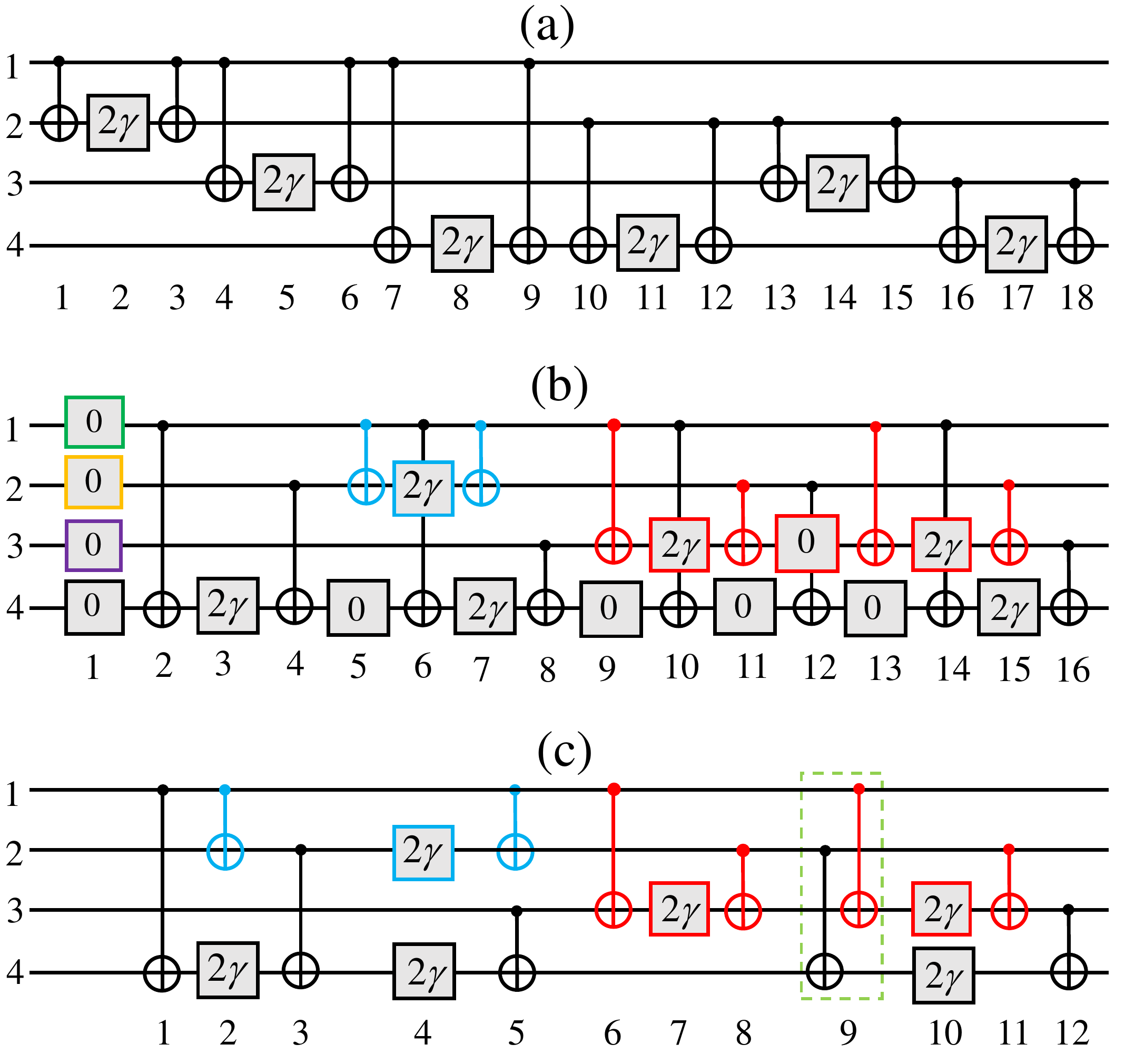}
   \caption{Quantum circuit in (a) of  depth 18 is an example of Eq.~\eqref{QAOA_diagonal} to implement the diagonal operator  inside a 4-qubit QAOA circuit \cite{2021QAOA1}, which can be resynthesizd by our \textbf{Algorithm}~\ref{Algorithm1} as shown in (b) and further optimized into a new one with 6 $R_Z$ gates, 11 CNOT gates and depth 12 as shown in (c).
   }
   \label{figure8}
\end{figure}

It can be seen that $QC_D(\gamma)$ in Eq.~\eqref{QAOA_diagonal} has totally $(n^2-n)$ CNOT gates,$(n^2-n)/2$ $R_Z(2\gamma)$ gates and depth $3(n^2-n)/2$ as exemplified by Fig.~\ref{figure8}(a) for  $n=4$ \cite{2021QAOA1},
and we aimed at the resynthesis of this important building block in QAOA circuits for achieving an optimized depth. For the original QAOA circuit with $QC_D(\gamma)$ , the positions of all CNOT and $R_Z(-2\gamma)$ gates are fixed while the parameter $\gamma$ is updated in each loop during running the QAOA. Accordingly,  here for each size $n$ we consider performing experiments on resynthesizing 100 circuit instances $QC_D(\gamma)$ with their parameter value $\gamma$ varying over $(0,\pi)$. 

For each target circuit determined by $n$ and $\gamma$, we first use Eq.~\eqref{theta_k}  to calculate all $2^n$ parameter angles denoted $\{\theta_k: k\in \{0,1\}^n\}$ of the  diagonal unitary matrix $D(\gamma)$ represented by $QC_D(\gamma)$. Then we perform our \textbf{Algorithm}~\ref{Algorithm1} by adopting Eq.~\eqref{solve_alpha} to synthesize circuits for realizing each $D(\gamma)$, followed by the suitable optimization process  introduced in Section~\ref{Discussion on  Further}. An example of our result with $n=4$ is shown in Figs.~\ref{figure8}(b) and \ref{figure8}(c), such that we obtain a circuit with a shorter depth of 12 compared to the original circuit of  depth 18 in Fig.~\ref{figure8}(a).
For the size $3 \leq n \leq 14$, the depths of original circuits in Eq.~\eqref{QAOA_diagonal}  denoted $d_0$, the average runtimes of \textbf{Algorithm}~\ref{Algorithm1} denoted $t_1$ and circuit depths of our final results denoted $d_1$ are reported in Table \ref{table2}. The result presented in Fig.~\ref{figure9}  reveals that our strategy can achieve a depth reduction over the original circuit (that is, $1-d_1/d_0$) ranging from 13.19\% to 33.33\% with an average value of 22.05\%.

\begin{table}
\centering
\begin{tabular}{|l|l|l|l|}
\hline
$n$                 & $d_0$     & $d_1$         & $t_1$(sec)                                        \\ \hline
3                   & 9        & 6            & <0.001                  \\ \hline
4                   & 18        & 12             & <0.001                   \\ \hline
5                   & 30        & 21            & 0.001                \\ \hline
6                   & 45       & 33           & 0.002                 \\ \hline
7                   & 63       & 48           & 0.004               \\ \hline
8                   & 84       & 66            & 0.009             \\ \hline
9                   & 108      & 87            & 0.018          \\ \hline
10                  & 135      & 111            & 0.039               \\ \hline
11                  & 165      & 138            & 0.084            \\ \hline
12                  & 198      & 168            & 0.207            \\ \hline
13                  & 234     & 201            & 0.545          \\ \hline
14                  & 273     & 237           & 1.829                  \\ \hline
\end{tabular}
\caption{For implementing the diagonal operator in QAOA circuits with $3 \leq n \leq 14$, $d_0$,  $d_1$ and $t_1$ represent the depths of original circuits in Eq.~\eqref{QAOA_diagonal}, the depths of our final resynthesized circuits, and the average runtimes of performing \textbf{Algorithm}~\ref{Algorithm1} based on Eq.~\eqref{solve_alpha},  respectively. }
\label{table2}
\end{table}

\begin{figure}[ht]
   \centering
   \includegraphics[scale=0.3
    ]{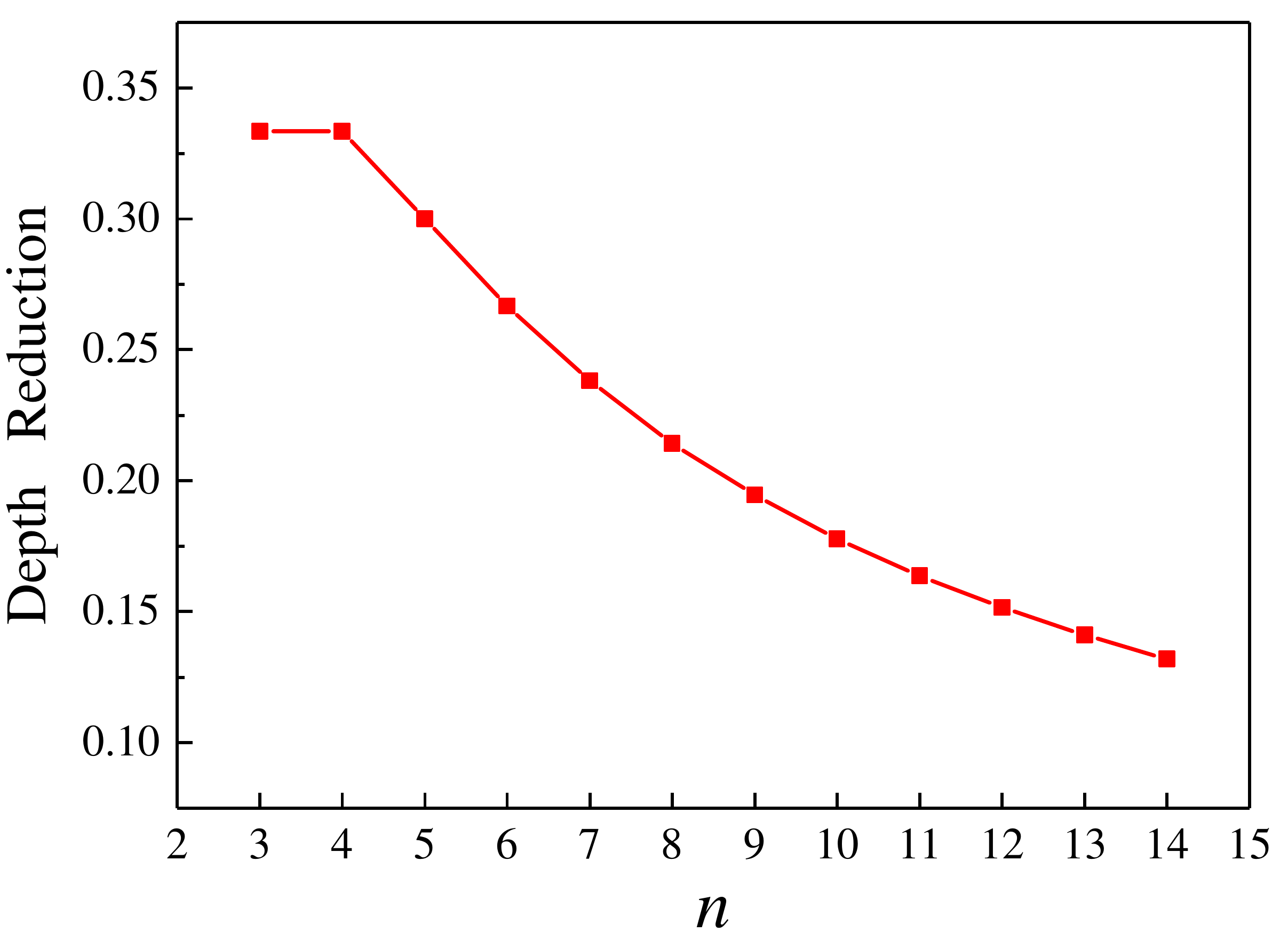}
   \caption{Depth reduction by comparing the circuits resynthesized from our  \textbf{Algorithm }\ref{Algorithm1} followed by  simple optimizations in Section~\ref{Discussion on  Further} with the original circuits in Eq.~\eqref{QAOA_diagonal} as an important part of QAOA circuits \cite{2021QAOA1} for $3 \leq n \leq 14$, which is calculated from Table~\ref{table2} as $1-d_1/d_0$ and in the range of 13.19\% to 33.33\%.  }
   \label{figure9}
\end{figure}

Moreover, our synthesized \{CNOT, $R_Z(-\beta)$\}  circuits are shown to have the same configurations with  $\beta=2\gamma$ when we vary the input parameter $\gamma$ in Eq.~\eqref{QAOA_diagonal}, as exemplified by Fig.~\ref{figure8}(c). Therefore, in this way we actually provide a functional equivalent but depth-optimized  \textit{Ansatz} circuit to implement the diagonal operator in such QAOA circuits instead of Eq.~\eqref{QAOA_diagonal}.

\section{  Conclusion}
\label{sec-conclusion}
In this paper, we focus on the synthesis of quantum circuits over the gate set \{CNOT, $R_Z$\}  for implementing diagonal unitary matrices with both asymptotically optimal gate count and an optimized circuit depth, and   conductive our study in a step-by-step way. First, we derive a kind of \{CNOT, $R_Z$\} circuit  with a regular structure and the  asymptotically optimal gate count for general cases (see Theorem~\ref{circuit_construction}). Next, we discover a uniform circuit rewriting rule suited for notably reducing the $n$-qubit circuit of this type (see Theorem~\ref{rule} ). Finally, we further propose a new circuit synthesis algorithm denoted  \textbf{Algorithm}~\ref{Algorithm1} such that once a circuit with the asymptotically optimal gate count is generated, its circuit depth has already been optimized compared with that from the previous well-known method \cite{ref022}, which is the central contribution of this paper. For this reason, we call our synthesis algorithm an automatic depth-optimized algorithm. For the reader's convenience, we have presented intuitive instances to illustrate the working principle of our main results, e.g., Fig.~\ref{figure4} for  Theorem~\ref{rule} and Fig.~\ref{figure5} for \textbf{Algorithm}~\ref{Algorithm1}. Furthermore, we have demonstrate the performances of our synthesis algorithm on two cases, including a random diagonal operator with up to 16 qubits and a QAOA circuit with up to 14 qubits, which can both achieve noteworthy reductions in circuit depth and thus might be useful for other cases in quantum computing as well. Besides, the proposed circuit rewriting rule in Theorem~\ref{rule} can act as a subroutine for optimizing other similar \{CNOT, $R_Z$\} circuits, e.g., the optimization process in Fig.~\ref{Fig4} with the dashed red box including $S_{p_m}$ in Theorem~\ref{rule}.  
We believe these easy-to-follow and flexible techniques in this paper can facilitate the development of design automation for quantum computing \cite{ref040}.

Some related problems that are worthy of further study in the future work are raised here: (1) The matrix associated with a  \{CNOT, $R_Z$\} circuit can be decomposed into a diagonal matrix combined with a permutation matrix, and thus we can further consider exploring the power and limitations of general  \{CNOT, $R_Z$\} circuits as well as the synthesis and optimization of such circuits by extending the algorithms in this paper. (2) Our synthesis procedure may need to apply CNOT gates to all pairs of qubits, and thus is suitable for physical systems with  all-to-all connectivity  such as ion trap \cite{wright2019benchmarking} and photonic  system \cite{bartlett2021deterministic}. Considering the restrictions on  other near-term quantum hardware (e.g. superconducting systems), how to compile  diagonal unitary matrices with respect to certain  hardware constraints  (e.g. limited qubit connectivity) is a more  complicated issue. 

\bibliographystyle{unsrt}
\bibliography{refs}

\end{document}